\tikzstyle{b} = [draw, thick, black, -]
\newtheorem{theorem}{Theorem}[section]
\theoremstyle{definition}
\newtheorem{remark}[theorem]{Remark}
\newtheorem{thm}{Theorem}
\newtheorem{co}[thm]{Corollary}
\newtheorem{lem}[thm]{Lemma}
\newtheorem{assumption}[thm]{Assumption}
\newtheorem{pr}[thm]{Proposition}
\newtheorem{defi}[thm]{Definition}
\newtheorem{exmp}[thm]{Example}
\newcommand{\F}{\mathbb{F}}
\newcommand{\Z}{\mathbb{Z}}
\newcommand{\R}{\mathbb{R}}
\newcommand{\Q}{\mathbb{Q}}
\newcommand{\N}{\mathbb{N}}
\newcommand{\row}{\text{Row}}
\newcommand{\poly}{\text{poly}}
\newcommand{\conv}{\text{conv}}
\newcommand{\supp}{\text{Supp}}
\title{Pseudocodewords of quantum, quasi-cyclic, and spatially-coupled LDPC codes: a fundamental cone perspective}
\author{Wittawat Kositwattanarerk${}^1$}
\author{Gretchen L.\ Matthews${}^2$}
\author{Emily McMillon${}^2$}
\author{Tunchanok Yutitumsatit${}^1$}
\address{${}^1$Department of Mathematics, Faculty of Science, Mahidol University, Bangkok, 10400, Thailand}
\address{${}^2$Department of Mathematics, Virginia Tech, Blacksburg, 24061, Virginia, USA}
\thanks{Matthews was partially supported by NSF DMS-2201075 and the Commonwealth Cyber Initiative.}
\thanks{McMillon was supported by NSF DMS-2303380 and partially supported by the Commonwealth Cyber Initiative.}
\subjclass[2020]{34B05, 34B35, 81P45}
\keywords{Coding theory, LDPC codes, pseudocodewords, quantum stabilizer codes,  quasi-cyclic codes, spatially-coupled codes, LP decoding.}
\begin{document}

\begin{abstract}
While low-density parity-check (LDPC) codes are near capacity-achieving when paired with iterative decoders, these decoders may not output a codeword due to the existence of pseudocodewords. Thus, pseudocodewords have been studied to give insight into the performance of modern decoders including iterative and linear programming decoders. These pseudocodewords are found to be dependent on the parity-check matrix of the code and the particular decoding algorithm used. In this paper, we consider LP decoding, which has been linked to graph cover decoding, providing functions which capture these pseudocodewords. In particular, we analyze the underlying structure of pseudocodewords from quantum stabilizer codes that arise from LP decoding, quasi-cyclic LDPC codes, and spatially-coupled LDPC codes.
\end{abstract}

\maketitle

\section{Introduction}
\label{sec:intro}
Graph-based iterative decoders have been extensively studied due to their decoding efficiency and performance when used in conjunction with low-density parity-check (LDPC) codes; see for instance \cite{10.5555/3137384,1222728}. LDPC codes for quantum error correction (Q-LDPC codes) were first introduced in \cite{MMM} and are shown to have remarkable properties including facilitating fault-tolerant quantum computation with constant overhead \cite{Breuckmann,Gottesman,xu2023constant}. Q-LDPC codes have been studied extensively; see \cite{EKZ_22,krishna2023vidermans,LZ_22,Panteleev_22,TZ_14} for instance. Surface codes \cite{IBM}, which were introduced in \cite{bravyi1998quantum}, are among the Q-LDPC codes providing further motivation to study Q-LDPC codes. Indeed, there is much to be understood about how their representation impacts performance, a question that is still of interest for their classical counterpart. There are other parallels with the classical case. For instance, while the framework for iterative decoding applies to any parity-check code, descriptions of these more general codes (meaning beyond those with sparse matrix representations) make their implementation impractical. Similarly, while the theory we consider in this paper also applies to CSS codes that do not arise from LDPC codes, its usefulness may be hindered by complicated descriptions of the codes in terms of their parity-check matrices.

Quasi-cyclic LDPC (QC-LDPC) codes are a widely used class of LDPC codes due to their simple implementations. They have been used to analyze the pseudocodewords from LDPC convolutional codes \cite{4957655}, and, more recently, quasi-cyclic moderate-density parity-check (QC-MDPC) codes have become more relevant due to their role in post-quantum cryptography schemes based on the McEliece cryptosystem \cite{MTSB13} such as Bit Flipping Key Encapsulation (BIKE) \cite{BIKE}. 

The fundamental cone was introduced in \cite{KV} as the convex hull of all graph cover pseudocodewords and is proven useful in the study of pseudocodewords \cite{KM,KLVW}. In this paper, we consider the fundamental cone and associated pseudocodewords to capture impediments to successful graph-based iterative decoding, especially for the linear programming (LP) decoder \cite{FWK}. We build on the recent works of \cite{FGL,Li_Vontobel_ISIT} to consider the fundamental cone for quantum stabilizer codes and associated pseudocodewords. LP decoding for Q-LDPC codes has been introduced and examined in e.g. \cite{BBANH,groues:tel-04018968,javed2024lowcomplexity,Li_Vontobel_ISIT}.
The approach considered in this paper applies to LDPC codes and the MDPC codes considered for cryptographic use, and hence applies to QC-LDPC, QC-MDPC, and Q-LDPC codes, providing insight into LP and graph cover decoding among other message passing algorithms. Our approach differs from the previous work on pseudocodewords by \cite{Li_color,LVgraph,Rox_ISIT_09,Rox_Trans_12} due to its focus on generating functions and their concise representation of pseudocodewords. Analysis of the pseudocodewords will also be helpful toward pseudocodeword-based decoding \cite{Li_color,LVgraph}.

This paper is organized as follows. Section \ref{sec:preliminaries} contains background information on codes, LP decoding, pseudocodewords, and the fundamental cone. Section \ref{sec:fund_cone} includes general results on the fundamental cone of binary parity-check codes. Section~\ref{sec:classes} applies results given in Section~\ref{sec:fund_cone} to three specific sub-classes of parity-check codes: quantum codes in Section~\ref{sec:quantum}, quasi-cyclic codes in Section~\ref{sec:qc}, and spatially-coupled codes in Section~\ref{sec:sc}. Section \ref{sec:conclusions} ends the paper with a summary and open problems.

\section{Preliminaries} \label{sec:preliminaries}

We begin by introducing some standard notation we will use throughout. The set of nonnegative integers is denoted by $\N$. Given a positive integer $t$, $[t]:=\left\{ 1, \dots, t \right\}$. $\F_2$ denotes the finite field with two elements, and $\F_2^{m \times n}$ denotes the set of $m \times n$ matrices with entries from $\F_2$. The all-zero matrix of size $m \times n$ is denoted by ${0}_{m \times n}$. For $j \in [m]$, $Row_j(A)$ denotes the $j^{th}$ row of the matrix $A \in \F_2^{m \times n}$. Vectors are denoted in boldface, i.e. $\mathbf{x}$. The support of $\mathbf{x} \in \F_2^n$ is $\supp({\mathbf x})=\left\{ i \in [n]: x_i \neq 0 \right\}$. A graph is a pair $(V;E)$ where $V$ is the set of vertices and $E$ is the set of edges. Vertices $v_1,v_2\in V$ are adjacent, or connected by an edge, if $(v_1,v_2)\in E$.

The codes we consider are binary linear codes, which we now define.

\begin{defi}
Let $k$ and $n$ be positive integers where $k \leq n$. A \textit{binary linear code} $C$ of length $n$ and dimension $k$ is a subspace of $\F_2^n$ of dimension $k$. The \textit{minimum distance} $d$ of $C$ is the minimum support size of a nonzero vector in $C$, i.e.
\[d=\min\{|\supp(\mathbf{c})|\mid \mathbf{c}\in C\setminus\{\mathbf{0}\}\}.\]
If $n$, $k$, and $d$ are known for a code $C$, we call $C$ an $[n,k,d]$ code.

A matrix $H\in\mathbb{F}_2^{r\times{}n}$, where $r \geq n-k$, is called a \textit{parity-check matrix} of $C$ if $C$ is the null space of $H$. Since a parity-check matrix of a code is not unique, we write $C$ as $C(H)$ to specifically designate $C$ to the matrix $H$. This makes
\[ C(H) = \{\mathbf{c} \in \F_2^n : H \mathbf{c}^T = \mathbf{0} \}.\]
\end{defi}

The code $C(H)$ is called a \textit{low-density parity-check (LDPC) code} if $H \in \F_2^{m \times n}$ has row weight $O(1)$, meaning few non-zero entries. One may also consider \textit{moderate-density parity-check (MDPC) codes} $C(H)$ which are given by matrices $H \in \F_2^{m \times n}$ with row weight $O(\sqrt{n})$.

In this paper, we consider the LP decoder for both classical and quantum codes, which also gives insight into other iterative decoders, most notably graph-based iterative decoders. 
Viewing codewords as elements of $\R^n$, the codeword polytope of a binary linear code C is the convex hull of the codewords:
$$\poly(C) =
\left\{ \sum_{\mathbf{c} \in C} \lambda_{\mathbf{c}} \mathbf{c} : \lambda_\mathbf{c} \in [0,1], \sum_{\mathbf{c} \in C} \lambda_\mathbf{c} =1 \right\} \subseteq [0,1]^n,$$
which depends on the set of codewords (as opposed to the representation of the code). We note that the vertices of $\poly(C)$ are precisely the codewords of $C$.

Maximum likelihood (ML) decoding may be described using the codeword polytope as follows; for more details, see \cite{FWK}. Given a received word $\mathbf{w} \in \F_2^n$ and $i \in [n]$, let 
$$
\gamma_i:=- \log \left( \frac{Pr(w_i \mid 1)}{Pr(w_i \mid 0)} \right)
$$
denote the log likelihood ratio (LLR) associated with position $i$.
Then the ML decoding problem is to find $\mathbf{y} \in C$ such that the cost function $\sum_{i=1}^n \gamma_i y_i$ is minimized, which can be stated as 
\begin{equation}\label{E:ML_LP}
\min \sum_{i=1}^n \gamma_i y_i \textnormal{ subject to } y \in \poly(C).
\end{equation}
Generally, the solution to the LP in Equation (\ref{E:ML_LP}) will be a vertex of $\poly(C)$ and the vertices of $\poly(C)$ are precisely the codewords of $C$, so the solution will be a codeword of $C$ as desired. However, in general, the description of $\poly(C)$ is exponential in $n$ \cite{FWK}. It is this complexity that motivates the relaxed polytope, which in turn invites non-codeword pseudocodewords. 

To obtain a more tractable representation, the codeword polytope is relaxed by fixing a particular parity-check matrix \cite{FWK}. Let $H \in \F_2^{r \times n}$ be a parity-check matrix for the binary linear code $C(H)$. For $j \in [r]$, each row of $H$ defines a 
codeword polytope of a supercode of $C(H)$: $\poly( C(\row_j (H)))$. Noting that $C = \bigcap_{j=1}^r C \left( \row_j (H)\right)$, the relaxed polytope of $C(H)$ is 
$$R(H):=\bigcap_{j\in [r]} \poly(C(\row_j (H))) \subseteq [0,1]^n$$
as defined in \cite{FWK}. Unlike the codeword polytope $\poly(C)$, the relaxed polytope depends on the choice of parity-check matrix representation. The associated LP, sometimes called the Linear Code Linear Program, may be stated as 
\begin{equation}\label{E:LCLP}
\min \sum_{i=1}^n \gamma_i y_i \textnormal{ subject to } \mathbf{y} \in R(H).
\end{equation}
It has been shown that if the optimal solution $\mathbf{y}$ satisfies $\mathbf{y} \in \F_2^n$, then $\mathbf{y}$ is the ML codeword \cite[Proposition 2]{FWK}. Moreover, the description of the relaxed polytope $R(H)$ is polynomial in the code length $n$ \cite[Appendix II]{FWK}, rather than exponential as is the case for $\poly(C)$. 
However, an optimal solution to the LP in (\ref{E:LCLP}) is a vertex of $R(H)$ which might not be a vertex of $\poly(C(H))$, meaning a noncodeword. 
Indeed, note that in general 
$$
\poly(C(H)) \subsetneqq R(H)$$
inviting the possibility that noncodewords may appear as outputs of the linear program in (\ref{E:LCLP}) optimizing over $R(H)$. This situation motivates the following definition. 
\begin{defi} \label{def:pcw}
    An LP pseudocodeword of the code $C(H)$ is a vertex of the relaxed polytope $R(H)$.
\end{defi}

The original setting for LP decoding using the Linear Code Linear Program is classical. More recently, it was demonstrated in \cite{Li_Vontobel_ISIT} that there are analogous results for quantum codes through the use of label codes. Among them are the fact that this polytope is the correct one for blockwise MAP decoding in the setting they consider, yet it is intractable. More details will be provided in Section \ref{sec:quantum}.

Alternatively, pseudocodewords can be defined from a graph cover perspective and made precise with the definition of the fundamental cone. A graph cover is defined over a bipartite graph called \textit{Tanner graph}, which is used for iterative decoding algorithms on parity-check codes.

\begin{defi}
    Let $C(H)$ be the code defined by parity-check matrix $H \in \F_2^{m \times n}$. The \textit{Tanner graph} $T(H)$ of $C$ corresponding to $H$ is the bipartite graph $T(H) = (V,W;E)$ with vertex set $V$ corresponding to the $n$ columns of $H$ and vertex set $W$ corresponding to the $m$ rows of $H$ such that there is an edge between $v_i \in V$ and $w_j \in W$ if and only if $H_{j,i} = 1$.
\end{defi}

From a graphical perspective, the Tanner graph encodes codewords as follows: a vector is assigned to the vertex set $V$. This vector is in $C$ if and only if the sum at each node $w \in W$ of all assignments of the neighbors of $w$ in $V$ is equal to $0$ modulo $2$.

Iterative decoders based on message-passing algorithms act locally on a Tanner graph in order to find a global solution. This aspect of iterative decoders delivers a great strength -- speed -- at the cost of optimality. Loosely speaking, iterative graph decoding operates in the following way: a received message ($0$'s and $1$'s, or their likelihoods) from the channel is put onto the variable nodes (sometimes along with a reliability value). Each variable node then sends information to its neighboring check nodes. Each check node now calculates new estimates for the assignments to its neighboring variable nodes and sends this information back. This process continues until either a codeword is obtained or a certain predetermined number of iterations have occurred.

Generally speaking, these algorithms converge to a solution fairly quickly, but the structure of the Tanner graph can sometimes cause iterative decoders to fail or converge to non-optimal solutions. This is caused by the existence of \textit{graph cover pseudocodewords} which, loosely speaking, are solutions to some finite lift of a code's Tanner graph representation.

\begin{defi}
A finite graph cover of $T(H)$ is a bipartite graph $(V_1\cup V_2\cup\ldots\cup V_t,W_1\cup W_2\cup\ldots\cup W_t;E)$ where $V_k$ and $W_k$ are disjoint copies of $V$ and $W$ respectively for all $k\in[t]$. If there is an edge between $v_i \in V$ and $w_j \in W$ (i.e., $H_{j,i} = 1$.), then each copy of $v_i$ will be adjacent to exactly one copy of $w_j$.
\end{defi}

\begin{exmp} \label{ex:Tanner_graph}
    Let the following matrix be a parity-check matrix for a code $C(H)$.
    \[ H = \begin{bmatrix} 1 & 1 & 0 & 1 & 0 & 0 \\ 0 & 1 & 1 & 0 & 1 & 0 \\ 0 & 0 & 0 & 1 & 1 & 1 \end{bmatrix} \]
    Then the corresponding Tanner graph of the code, $T(H)$, is given in Figure~\ref{fig:Tanner_graph}. Check nodes, which correspond to rows in $H$, are denoted by squares, and variable nodes, which correspond to columns of $H$, are denoted by circles, as is convention. Additionally, Figure~\ref{fig:cover} gives an example of a graph cover of $T(H)$.
\end{exmp}

\begin{figure}[h]
\centering
\begin{tikzpicture}[scale=.5,square/.style={regular polygon,regular polygon sides=4}]

\node (v1) at (0,0) [circle,draw,fill=black,scale=0.5] {};
\node (v2) at (2,0) [circle,draw,fill=black,scale=0.5] {};
\node (v3) at (4,0) [circle,draw,fill=black,scale=0.5] {};
\node (v4) at (6,0) [circle,draw,fill=black,scale=0.5] {};
\node (v5) at (8,0) [circle,draw,fill=black,scale=0.5] {};
\node (v6) at (10,0) [circle,draw,fill=black,scale=0.5] {};
\node (c1) at (2,3) [square,draw,fill=black,scale=0.5] {};
\node (c2) at (5,3) [square,draw,fill=black,scale=0.5] {};
\node (c3) at (8,3) [square,draw,fill=black,scale=0.5] {};

\path[b] (v1) to (c1);
\path[b] (v2) to (c1);
\path[b] (v2) to (c2);
\path[b] (v3) to (c2);
\path[b] (v4) to (c1);
\path[b] (v4) to (c3);
\path[b] (v5) to (c2);
\path[b] (v5) to (c3);
\path[b] (v6) to (c3);

\end{tikzpicture}
\caption{The Tanner graph associated with the parity check matrix in Example~\ref{ex:Tanner_graph}.}
\label{fig:Tanner_graph}
\end{figure}
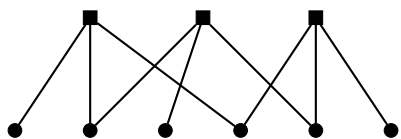

\begin{figure}[h]
\centering
\begin{tikzpicture}[scale=.5,square/.style={regular polygon,regular polygon sides=4}]

\node (v11) at (0,0) [circle,draw,fill=black,scale=0.5] {};
\node (v12) at (2,0) [circle,draw,fill=black,scale=0.5] {};
\node (v13) at (4,0) [circle,draw,fill=black,scale=0.5] {};
\node (v14) at (6,0) [circle,draw,fill=black,scale=0.5] {};
\node (v15) at (8,0) [circle,draw,fill=black,scale=0.5] {};
\node (v16) at (10,0) [circle,draw,fill=black,scale=0.5] {};
\node (c11) at (2,3) [square,draw,fill=black,scale=0.5] {};
\node (c12) at (5,3) [square,draw,fill=black,scale=0.5] {};
\node (c13) at (8,3) [square,draw,fill=black,scale=0.5] {};

\node (v21) at (0.5,0) [circle,draw,fill=black,scale=0.5] {};
\node (v22) at (2.5,0) [circle,draw,fill=black,scale=0.5] {};
\node (v23) at (4.5,0) [circle,draw,fill=black,scale=0.5] {};
\node (v24) at (6.5,0) [circle,draw,fill=black,scale=0.5] {};
\node (v25) at (8.5,0) [circle,draw,fill=black,scale=0.5] {};
\node (v26) at (10.5,0) [circle,draw,fill=black,scale=0.5] {};
\node (c21) at (2.5,3) [square,draw,fill=black,scale=0.5] {};
\node (c22) at (5.5,3) [square,draw,fill=black,scale=0.5] {};
\node (c23) at (8.5,3) [square,draw,fill=black,scale=0.5] {};

\path[b] (v11) to (c11);
\path[b] (v12) to (c21);
\path[b] (v12) to (c12);
\path[b] (v13) to (c12);
\path[b] (v14) to (c11);
\path[b] (v14) to (c13);
\path[b] (v15) to (c22);
\path[b] (v15) to (c13);
\path[b] (v16) to (c13);

\path[b] (v21) to (c21);
\path[b] (v22) to (c11);
\path[b] (v22) to (c22);
\path[b] (v23) to (c22);
\path[b] (v24) to (c21);
\path[b] (v24) to (c23);
\path[b] (v25) to (c12);
\path[b] (v25) to (c23);
\path[b] (v26) to (c23);

\end{tikzpicture}
\caption{A graph cover of $T(H)$ from Example~\ref{ex:Tanner_graph} where there is an extra copy of each vertex.}
\label{fig:cover}
\end{figure}
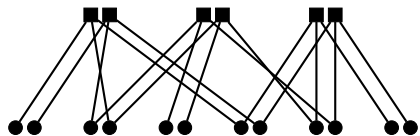

Now, a graph cover itself can act as a Tanner graph, and graph cover pseudocodewords are essentially codewords of these graph covers. For more background on graph cover pseudocodewords, we refer the interested reader to \cite{KLVW} or \cite{KV}. For a code $C(H)$ with parity-check matrix $H$, we denote the associated set of graph cover pseudocodewords by $\mathcal{P}(H)$. It was shown in \cite{KV} that rational elements of the relaxed polytope, $R(H) \cap \Q^n$, are scaled graph cover pseudocodewords. The associated fundamental cone has proven useful in the analysis of these outputs.

\begin{defi} \label{def:cone}
    The fundamental cone of $H \in \F_2^{r \times n}$ is 
$$
    \mathcal{K}(H) = \left\{ \mathbf{v} \in \R_{\geq 0}^n : \row_j (H) \cdot \mathbf{v} \geq 2 h_{ji}v_i \ \forall i \in [n], \forall j \in [r]\right\}. 
$$
\end{defi}

The relationship between the fundamental cone and the set of graph cover pseudocodewords is given in the following result.

\begin{thm} \cite[Theorem 4.4]{KLVW} \label{thm:equiv}
    Let $H \in \F_2^{r \times n}$ and $\mathbf{p} \in \N^n$. Then $\mathbf{p} \in \mathcal{P}(H)$ if and only if $\mathbf{p} \in \mathcal{K}(H)$ and $H \mathbf{p}^T = \mathbf{0} \mod 2$.
\end{thm}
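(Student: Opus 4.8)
The statement is an ``if and only if,'' so the plan is to prove the two implications separately, using throughout the concrete description of a graph cover pseudocodeword: a point $\mathbf p\in\N^n$ lies in $\mathcal P(H)$ exactly when there is an integer $M\ge 1$, a degree-$M$ cover of the Tanner graph $T(H)$, and a $\{0,1\}$-assignment to the $Mn$ variable nodes of the cover satisfying every check node, such that for each $i\in[n]$ precisely $p_i$ of the $M$ copies of $v_i$ receive the value $1$. In a degree-$M$ cover the $M$ copies of a check $w_j$ are joined to the copies of each neighboring variable node by a perfect matching, i.e.\ a permutation of $[M]$; this bookkeeping drives both directions.

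For the forward implication, assume $\mathbf p\in\mathcal P(H)$ is realized by such a cover. Nonnegativity $\mathbf p\in\R_{\ge 0}^n$ is immediate. Fix $j\in[r]$ and $i\in[n]$. If $h_{ji}=0$ then $\row_j(H)\cdot\mathbf p\ge 0=2h_{ji}p_i$, so assume $h_{ji}=1$. Every copy of $w_j$ is satisfied, hence sees an even number of $1$-valued neighbors; in particular any copy of $w_j$ whose $v_i$-neighbor is $1$-valued sees at least two $1$-valued neighbors. Exactly $p_i$ of the $M$ copies of $w_j$ have a $1$-valued $v_i$-neighbor, so summing the number of $1$-valued neighbors over all copies of $w_j$ gives at least $2p_i$; but this same sum equals $\sum_{i'\,:\,h_{ji'}=1}p_{i'}=\row_j(H)\cdot\mathbf p$, which is the fundamental cone inequality. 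Since that sum is also a sum of $M$ even numbers (one per copy of $w_j$), it is even, so $H\mathbf p^T\equiv\mathbf 0\pmod 2$.

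For the converse, assume $\mathbf p\in\mathcal K(H)\cap\N^n$ and $H\mathbf p^T\equiv\mathbf 0\pmod 2$; the plan is to build a cover realizing $\mathbf p$. After discarding the all-zero columns of $H$ (isolated variable nodes, for which the cone inequality forces $p_i=0$), fix $M\ge\tfrac12\max_j\bigl(\row_j(H)\cdot\mathbf p\bigr)$, and note this also gives $M\ge p_i$ for every $i$. In each bundle of $M$ copies of $v_i$ declare exactly $p_i$ of them to be $1$-valued. For a check $w_j$ with $\supp(\row_j(H))=\{i_1,\dots,i_d\}$ the parity hypothesis makes $t_j:=\tfrac12\sum_{\ell=1}^d p_{i_\ell}$ a nonnegative integer, and the cone inequalities read $p_{i_\ell}\le t_j$ for each $\ell$; by the classical realizability criterion for multigraph degree sequences, $(p_{i_1},\dots,p_{i_d})$ is then the degree sequence of a multigraph $G_j$ on $\{i_1,\dots,i_d\}$ with $t_j$ edges. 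Properly edge-color $G_j$ with color set $[M]$ (possible since $t_j\le M$), so each color class is a matching; for each $\ell$, attach the $1$-valued copies of $v_{i_\ell}$ to the copies of $w_j$ indexed by the $p_{i_\ell}$ colors meeting $i_\ell$, and the $0$-valued copies bijectively to the rest. Then the copy of $w_j$ indexed by color $k$ sees a $1$ from neighbor $i_\ell$ exactly when $i_\ell$ is covered by the matching of color $k$, hence sees an even number of $1$'s; so every check of the cover is satisfied and the resulting cover codeword has count vector $\mathbf p$, giving $\mathbf p\in\mathcal P(H)$.

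The forward implication is routine; the main work is the converse construction, and the step I expect to be the crux is recognizing that the hypothesis $H\mathbf p^T\equiv\mathbf 0\pmod 2$ is indispensable there — cone membership alone is not enough, witnessed by the single length-$3$ parity check with $\mathbf p=(1,1,1)$, which satisfies $3\ge 2\cdot 1$ yet admits no cover assignment in which each copy of the (degree-$3$) check sees a positive even number of $1$'s. The rest is bookkeeping: pinning down one cover degree $M$ that works for all checks at once, invoking the multigraph degree-sequence realization together with an edge coloring, and observing that the local wirings at different checks incident to a common variable node impose no joint constraint, since each matching may be chosen independently once the $1$-valued copies of each $v_i$ are fixed.
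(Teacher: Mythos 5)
The paper does not prove this theorem; it is quoted directly from \cite{KLVW} (their Theorem~4.4), so there is no in-paper proof to compare against. Your argument is a self-contained reconstruction that is correct in substance: the forward direction sums the local even-parity constraints over the $M$ copies of a check $w_j$ and uses the fact that any copy with a $1$-valued $v_i$-neighbor must see at least two ones; the converse realizes the local counts at each check as a multigraph degree sequence (which is exactly what the cone inequalities together with the parity hypothesis license), edge-colors that multigraph, and wires the cover color-by-color, noting correctly that the wirings at distinct checks may be chosen independently.

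There is one mis-justified step. You discard all-zero columns of $H$ on the grounds that ``the cone inequality forces $p_i=0$.'' It does not: for an isolated variable node the inequalities $\row_j(H)\cdot\mathbf{v}\ge 2h_{ji}v_i$ degenerate to $\row_j(H)\cdot\mathbf{v}\ge 0$, which place no bound on $v_i$, so $p_i$ may be any nonnegative integer. The discard is still harmless---an isolated node imposes no check constraint, so any $p_i$ is realizable once $M\ge p_i$---but your stated choice $M\ge\tfrac12\max_j\bigl(\row_j(H)\cdot\mathbf p\bigr)$ only guarantees $M\ge p_i$ for those $i$ lying in the support of some row. Take instead $M\ge\max\bigl\{1,\,\max_i p_i,\,\tfrac12\max_j\bigl(\row_j(H)\cdot\mathbf p\bigr)\bigr\}$; the ``$\ge 1$'' also handles the degenerate case $\mathbf p=\mathbf 0$. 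With that one-line repair the construction goes through as written.
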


Throughout, we drop the ``graph cover'' adjective on pseudocodewords when it is clear from the context. We will make use of the fundamental cone and this equivalence to give insights into graph cover pseudocodewords.

\section{Properties of the fundamental cone for general codes} \label{sec:fund_cone}

In this section, we develop the key properties of the fundamental cone that will be applied to three classes of parity-check codes in Section~\ref{sec:classes} and provide key observations on the fundamental cone which will be useful in studying the set $\mathcal \mathcal{P}(H)$ of pseudocodewords of a binary code $C(H)$. We will consider associated generating functions, a direction  motivated by the work of Koetter, Li, Vontobel, and Walker \cite{KLVW} concerning
cycle codes, meaning those in which all bit nodes in the associated Tanner graph $T(H)$ have degree $2$. Beginning in \cite{KLVW_ITW}, they associate a normal graph with the parity-check matrix for such a code and then consider the Hashimoto edge zeta function of this normal graph. 
This allows for the pseudocodewords of a cycle code to be read off from
the monomials appearing in the power series expansion of an associated rational function. This point of view was carried on in \cite{KM} and \cite{KM_2} to study the set of pseudocodewords via their generating functions. In particular, the set of pseudocodewords $\mathcal{P}(H)$ of a code $C(H)$ with parity-check matrix $H$ can be captured via the $n$-variate generating function $f_{H}$:
$$ f_{H}({\bf{x}}):=\sum_{\mathbf{p} \in \mathcal{P}(H)} {\bf{x}}^\mathbf{p} \in \F_2[x_1,\dots, x_n],$$
where $\mathbf{x}^{\mathbf{p}} := x_1^{p_1} \dotsm x_n^{p_n}$ for $\mathbf{p} \in \mathcal{P}(H)$ and $\mathbf{x} \in \F_2[x_1, \dotsc, x_n]$.

The following handful of results (Lemma~\ref{lem:intersection}, Theorem~\ref{thm:products}, and Theorem~\ref{thm:blockrow}) all consider natural ways of combining collections of block matrices into larger parity-check matrices.

\begin{lem} \label{lem:intersection}
Let $H_1,\ldots,H_t$ be a collection of matrices formed by the rows of $H$, where $H_\ell \in \F_2^{r_\ell \times n}$ and $r := \sum_{\ell=1}^t r_{\ell}$.
   The fundamental cone of
   \[ H = \begin{bmatrix} H_1 \\ H_2 \\ \vdots \\ H_t \end{bmatrix} \in\F_2^{r \times n}\]
  is
    \[ \mathcal{K}(H) = \bigcap_{\ell=1}^t \mathcal{K}(H_\ell).\]
\end{lem}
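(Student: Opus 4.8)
The plan is to unwind Definition~\ref{def:cone} and observe that the defining inequalities of $\mathcal{K}(H)$ are, by construction, exactly the union of the defining inequalities of the $\mathcal{K}(H_\ell)$. Concretely, $\mathcal{K}(H)$ consists of all $\mathbf{v} \in \R_{\geq 0}^n$ satisfying $\row_j(H) \cdot \mathbf{v} \geq 2 h_{ji} v_i$ for every $i \in [n]$ and every $j \in [r]$. Since the rows of $H$ are partitioned into the blocks $H_1, \dots, H_t$, each index $j \in [r]$ belongs to exactly one block $H_\ell$, and for that $j$ the row $\row_j(H)$ equals some row $\row_{j'}(H_\ell)$ with the matching entry $h_{ji} = (H_\ell)_{j'i}$. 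Hence the constraint indexed by $(i,j)$ in the definition of $\mathcal{K}(H)$ is literally one of the constraints defining $\mathcal{K}(H_\ell)$.

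First I would fix $\mathbf{v} \in \R_{\geq 0}^n$ and argue both inclusions. For $\mathcal{K}(H) \subseteq \bigcap_\ell \mathcal{K}(H_\ell)$: if $\mathbf{v} \in \mathcal{K}(H)$, then $\mathbf{v}$ satisfies every constraint $\row_j(H)\cdot\mathbf{v} \geq 2h_{ji}v_i$; restricting to those $j$ lying in block $H_\ell$ shows $\mathbf{v} \in \mathcal{K}(H_\ell)$, and this holds for each $\ell$. For the reverse inclusion: if $\mathbf{v} \in \mathcal{K}(H_\ell)$ for all $\ell$, then every row of $H$ — being a row of some $H_\ell$ — gives a satisfied constraint, so $\mathbf{v} \in \mathcal{K}(H)$. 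The nonnegativity condition $\mathbf{v} \in \R_{\geq 0}^n$ is common to all of these cones, so it poses no issue. Note one should observe that the hypothesis ``$H_1,\dots,H_t$ are formed by the rows of $H$'' is used only to guarantee the row index set of $H$ is the disjoint union of those of the $H_\ell$ in a way compatible with the entries $h_{ji}$; this is the one bookkeeping point worth stating explicitly.

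Honestly there is no real obstacle here: the statement is essentially a definitional reindexing, and the ``hard part'' is merely to phrase the partition of row indices cleanly so the correspondence between the constraint set of $\mathcal{K}(H)$ and the union of the constraint sets of the $\mathcal{K}(H_\ell)$ is transparent. I would keep the proof to a few lines and not belabor the index arithmetic.
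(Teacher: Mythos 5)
Your proposal is correct and takes essentially the same approach as the paper: both observe that the row index set of $H$ is partitioned among the blocks $H_\ell$, so the defining inequalities of $\mathcal{K}(H)$ are exactly the union of those defining the $\mathcal{K}(H_\ell)$, giving the intersection immediately. The paper phrases this as a chain of set equalities rather than two inclusions, but the content is identical.
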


\begin{proof} The result follows almost immediately from the definition, since
    \begin{align*}
        \mathcal{K}(H) &= \left\{ \mathbf{v} \in \R_{\geq 0}^n : \row_j (H) \cdot \mathbf{v} \geq 2 h_{ji} v_i \ \forall i \in [n], \forall j \in [r]\right\} \\
        &= \bigcap_{\ell=1}^t \left\{ \mathbf{v} \in \R_{\geq 0}^n : \begin{array}{l} \row_j(H_\ell) \cdot \mathbf{v} \geq 2 (H_\ell)_{ji} v_i \\ \forall i \in [n], \forall j \in [r_\ell] \end{array} \right\} \\
        &= \bigcap_{\ell=1}^t \mathcal{K}(H_\ell).
    \end{align*}
\end{proof}

\begin{thm} \label{thm:products}
        Let 
    \[ H = \begin{bmatrix} H_1 & 0 & \dotsm & 0 \\
                           0 & H_2 &  & 0 \\
                           \vdots & & \ddots & \vdots\\
                           0 & 0 & \dotsm & H_t \end{bmatrix} \in \F_2^{r \times n}.\]
    Then:
    \begin{enumerate}
        \item $R(H) = R(H_1) \times \dotsm \times R(H_t)$;
        \item $\mathcal{K}(H) = \mathcal{K}(H_1) \times \dotsm \times \mathcal{K}(H_t)$;
        \item $\mathcal{P}(H) = \mathcal{P}(H_1) \times \dotsm \times \mathcal{P}(H_t)$; and
        \item the generating function of the pseudocodewords of $C(H)$, $\mathcal{P}(H)$, is given by
        \[ f_H(\mathbf{x}_1, \dotsc, \mathbf{x}_t) = \prod_{k=1}^t f_{H_k}(\mathbf{x}_k).\]
    \end{enumerate}
\end{thm}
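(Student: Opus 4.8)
The plan is to prove the four claims in order, since each one feeds into the next. The underlying observation driving everything is that the block-diagonal structure of $H$ means that a vector $\mathbf{v} \in \R^n$ splits canonically as $\mathbf{v} = (\mathbf{v}_1, \dotsc, \mathbf{v}_t)$ with $\mathbf{v}_k$ of length equal to the column-count of $H_k$, and that \emph{every} row of $H$ is (up to padding with zeros) a row of exactly one block $H_k$, touching only the coordinates in the $k$-th slot. So each defining constraint of $R(H)$, $\mathcal{K}(H)$, or the parity-check system $H\mathbf{p}^T = \mathbf{0}$ is a constraint involving only one slot $\mathbf{v}_k$, and conversely the union over $k$ of the constraints for $H_k$ gives exactly the constraints for $H$.

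For part (1), I would write $R(H) = \bigcap_{j} \poly(C(\row_j(H)))$ from the definition in Section~\ref{sec:preliminaries}, group the rows $j$ according to which block $H_k$ they come from, and note that a supercode polytope $\poly(C(\row_j(H)))$ for a row living in slot $k$ imposes no constraint on the other slots; intersecting within slot $k$ recovers $R(H_k)$ as a subset of $[0,1]^{n_k}$, and intersecting the Cartesian-extended pieces across all slots gives the product $R(H_1) \times \dotsm \times R(H_t)$. Part (2) is essentially immediate from Lemma~\ref{lem:intersection} applied to the decomposition of $H$ into $t$ row-blocks, combined with the same slot-separation remark: $\mathcal{K}(H) = \bigcap_k \mathcal{K}(\widetilde{H_k})$ where $\widetilde{H_k}$ is $H_k$ padded with zero columns, and $\mathcal{K}(\widetilde{H_k}) = \R_{\geq 0}^{n_1} \times \dotsm \times \mathcal{K}(H_k) \times \dotsm \times \R_{\geq 0}^{n_t}$ because padded-zero columns contribute no nontrivial inequality. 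Intersecting these over $k$ yields the product. Part (3) then follows by invoking Theorem~\ref{thm:equiv}: $\mathbf{p} \in \mathcal{P}(H)$ iff $\mathbf{p} \in \mathcal{K}(H)$ and $H\mathbf{p}^T = \mathbf{0} \bmod 2$; by part (2) the first condition is $\mathbf{p}_k \in \mathcal{K}(H_k)$ for all $k$, and the parity-check condition decouples as $H_k \mathbf{p}_k^T = \mathbf{0} \bmod 2$ for all $k$, so again by Theorem~\ref{thm:equiv} this is exactly $\mathbf{p}_k \in \mathcal{P}(H_k)$ for all $k$, i.e. $\mathbf{p} \in \mathcal{P}(H_1) \times \dotsm \times \mathcal{P}(H_t)$. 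Part (4) is a formal bookkeeping step: using $\mathbf{x} = (\mathbf{x}_1, \dotsc, \mathbf{x}_t)$ and $\mathbf{x}^{\mathbf{p}} = \prod_k \mathbf{x}_k^{\mathbf{p}_k}$, the bijection $\mathcal{P}(H) \cong \prod_k \mathcal{P}(H_k)$ from part (3) gives
\[
f_H(\mathbf{x}_1, \dotsc, \mathbf{x}_t) = \sum_{\mathbf{p} \in \mathcal{P}(H)} \mathbf{x}^{\mathbf{p}} = \sum_{(\mathbf{p}_1, \dotsc, \mathbf{p}_t)} \prod_{k=1}^t \mathbf{x}_k^{\mathbf{p}_k} = \prod_{k=1}^t \left( \sum_{\mathbf{p}_k \in \mathcal{P}(H_k)} \mathbf{x}_k^{\mathbf{p}_k} \right) = \prod_{k=1}^t f_{H_k}(\mathbf{x}_k),
\]
where the interchange of sum and product is the usual distributive-law expansion of a finite product of finite sums.

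The only place that needs genuine care — the main obstacle, though a mild one — is part (1), because $R(H)$ is defined via convex hulls of supercode polytopes rather than by an explicit inequality system, so "padding with zeros doesn't change anything" has to be justified a touch more carefully than in the cone case: one must check that $\poly(C(\row_j(H)))$ for a row $j$ supported in slot $k$ equals $[0,1]^{n_1} \times \dotsm \times \poly(C(\row_{j'}(H_k))) \times \dotsm \times [0,1]^{n_t}$, which follows because adjoining a coordinate on which a code imposes no constraint takes the codeword polytope to its product with $[0,1]$. Everything else is routine unwinding of definitions, and I would keep the write-up correspondingly brief, doing (1) and (2) in a couple of lines each and then chaining (3) and (4) off Theorem~\ref{thm:equiv} and the bijection.
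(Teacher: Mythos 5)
Your proof is correct and follows essentially the same route as the paper's: block rows $L_k$, the observation that each $L_k$ only constrains the $k$-th slot (so $\mathcal{K}(L_k)$ is $\mathcal{K}(H_k)$ padded by $\R_{\geq 0}$ factors), Lemma~\ref{lem:intersection} to intersect, Theorem~\ref{thm:equiv} to pass from cones to pseudocodewords, and the distributive law for the generating function. Your care over part (1) — that the padded slots contribute a full $[0,1]^{n_i}$ factor to $\poly(C(\row_j(L_k)))$ — is in fact the one spot where the paper's written proof slips (it writes $\mathbf{0}_{n_k^{(0)}}\times R(H_k)\times \mathbf{0}_{n_k^{(1)}}$ where $[0,1]^{n_k^{(0)}}\times R(H_k)\times[0,1]^{n_k^{(1)}}$ is meant), so your version is the cleaner one.
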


\begin{proof}
Assume $H_i \in \F_2^{r_i \times n_i}$ so that $r = \sum_{i=1}^t r_i$ and $n = \sum_{i=1}^t n_i$. We define $L_k$ to be the $k$th block row of $H$. Set $n_k^{(0)} = \sum_{\ell=1}^{k-1} n_\ell$ and $n_k^{(1)} = \sum_{\ell=k+1}^t n_\ell$. Then we can write $ L_k = \begin{bmatrix} 0_{r_k \times n_k^{(0)}} & H_k & 0_{r_k \times n_k^{(1)}} \end{bmatrix}$.
Now, $$R(H)=\bigcap_{k\in [t]}\bigcap_{j\in [r_k]} \poly(\row_j (L_k)).$$
Since $\bigcap_{j\in [r_k]} \poly(\row_j (L_k))=\mathbf{0}_{n_k^{(0)}}\times R(H_k) \times\mathbf{0}_{n_k^{(1)}}$, 1) readily follows. We can also see that
\begin{align*}
    \mathcal{K}(L_k)&= \left\{ \mathbf{v} \in \R_{\geq 0}^n : \row_j(L_k) \cdot \mathbf{v} \geq 2(L_k)_{ji} v_i \ \forall i \in [n], \forall j \in [r_k]\right\} \\
    &= \left\lbrace \mathbf{v} \in \R_{\geq 0}^n : 
    \begin{array}{l}
        \row_j(0_{r_i \times n_k^{(0)}}) \cdot (v_1, \dotsc, v_{n_k^{(0)}}) + \\\row_j(H_k) \cdot (v_{n_k^{(0)}+1}, \dots, v_{n_k^{(0)} + n_k}) \\
    +  \row_j(0_{r_i \times n_k^{(1)}}) \cdot (v_{n_k^{(0)}+n_k+1}, \dotsc, v_n) \geq 2(L_k)_{ji} v_i \\
    \forall i \in [n], \forall j \in [r_k]
    \end{array}  \right\rbrace
\end{align*}

Because $\row_j(0_{r_i \times n_k^{(0)}}) \cdot (v_1, \dotsc, v_{n_k^{(0)}}) = 0$ and $\row_j(0_{r_i \times n_k^{(1)}}) \cdot (v_{n_k^{(0)}+n_k+1}, \dotsc, v_n) = 0$ for all $j$ and for any choices of $(v_1, \dotsc, v_{n_k^{(0)}}) \in \R_{\geq 0}^{n_k^{(0)}}$ and $(v_{n_k^{(0)}+n_k+1}, \dotsc, v_n) \in \R_{\geq 0}^{n_k^{(1)}}$, we see that
\begin{align*}
    \mathcal{K}(L_k) &= \R_{\geq 0}^{n_k^{(0)}} \times \mathcal{K}(H_k) \times \R_{\geq 0}^{n_k^{(1)}} \\
    &= \R_{\geq 0}^{n_1} \times \dotsm \times \R_{\geq 0}^{n_{k-1}} \times \mathcal{K}(H_k) \times \R_{\geq 0}^{n_{k+1}} \times \dotsm \times \R_{\geq 0}^{n_t}.
\end{align*}
By Lemma~\ref{lem:intersection}, 2) holds since
\[\mathcal{K}(H) = \bigcap_{k=1}^t \mathcal{K}(L_k) = \mathcal{K}(H_1) \times \dotsm \times \mathcal{K}(H_t).\]

Let $\mathbf{p} \in \mathcal{P}(H)$. Then using 2) and  Theorem~\ref{thm:equiv}, $\mathbf{p} \in \mathcal{K}(H) = \mathcal{K}(H_1) \times \dotsm \times \mathcal{K}(H_t)$ and $H\mathbf{p}^T = \mathbf{0} \mod 2$. We can write $\mathbf{p} = (\mathbf{p}_1, \dotsc, \mathbf{p}_t)$ where $\mathbf{p}_k \in \R_{\geq 0}^{r_k}$. From this decomposition, it is clear that $\mathbf{p}_k \in \mathcal{K}(H_k)$. Because $H \mathbf{p}^T = \mathbf{0} \mod 2$,  we conclude that $H_k \mathbf{p}_k^T = \mathbf{0} \mod 2$. Hence $\mathbf{p} \in \mathcal{P}(H_1) \times \dotsm \times \mathcal{P}(H_t)$. To see that $ \mathcal{P}(H_1) \times \dotsm \times \mathcal{P}(H_t) \subseteq \mathcal{P}(H)$, let $\mathbf{p} = (\mathbf{p}_1, \dotsc, \mathbf{p}_t) \in \mathcal{P}(H_1) \times \dotsm \times \mathcal{P}(H_t)$. Again using (1) and Theorem~\ref{thm:equiv}, $\mathbf{p} \in \mathcal{K}(H) = \mathcal{K}(H_1) \times \dotsm \times \mathcal{K}(H_t)$, $\mathbf{p}_k \in \mathcal{K}(H_k)$, and $H_k \mathbf{p}_k^T = \mathbf{0} \mod 2$ for all $k \in [t]$. Thus, $H \mathbf{p}^T = \sum_{k=1}^t H_k \mathbf{p}_k^T = \mathbf{0} \mod 2$, and so $\mathbf{p} \in \mathcal{P}(H)$. Hence $\mathcal{P}(H) = \mathcal{P}(H_1) \times \dotsm \times \mathcal{P}(H_t)$, proving 3).

To prove 4), let $f_{H_k}(\mathbf{x}_k)$ be the generating function of the pseudocodewords of $C(H_k)$. Then

\[ f_{H_k}(\mathbf{x}_k) = \sum_{\mathbf{p}_k \in \mathcal{P}(H_k)} \mathbf{x}_k^{\mathbf{p}_k} = \sum_{\mathbf{p}_k \in \mathcal{P}(H_k)} x_{k1}^{p_{k1}} \dotsm x_{k n_k}^{p_{k n_k}}.\]
By 2), $\mathcal{P}(H) = \mathcal{P}(H_1) \times \dotsm \times \mathcal{P}(H_t)$. Therefore, 4) holds as
\begin{align*}
    f_H(\mathbf{x}_1, \dotsc, \mathbf{x}_t) &= \sum_{(\mathbf{p}_1, \dotsc, \mathbf{p}_t) \in \mathcal{P}(H_1) \times \dotsm \times \mathcal{P}(H_t)} \mathbf{x}_1^{\mathbf{p}_1} \dotsm \mathbf{x}_t^{\mathbf{p}_t} \\
    &= \left( \sum_{\mathbf{p}_1 \in \mathcal{P}(H_1)} \mathbf{x}_1^{\mathbf{p}_1} \right) \dotsm \left( \sum_{\mathbf{p}_t \in \mathcal{P}(H_t)} \mathbf{x}_t^{\mathbf{p}_t} \right) \\
    &= \prod_{k=1}^t f_{H_k}(\mathbf{x}_k).
\end{align*}
\end{proof}

\begin{thm}\label{thm:blockrow}
    Let $H = \begin{bmatrix} H_1 & H_2 & \dotsm & H_t\end{bmatrix} \in \F_2^{r \times n}$.
    Then
    \begin{enumerate}
        \item $\mathcal{K}(H_1) \times \dotsm \times \mathcal{K}(H_t) \subseteq \mathcal{K}(H)$;
        \item $\mathcal{P}(H_1) \times \dotsm \times \mathcal{P}(H_t) \subseteq \mathcal{P}(H)$; and
        \item $f_H(\mathbf{0}, \dotsc, \mathbf{0}, \mathbf{x}_i, \mathbf{0}, \dotsc, \mathbf{0}) = f_{H_i}(\mathbf{x}_i)$.
    \end{enumerate}
\end{thm}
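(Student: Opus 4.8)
The plan is to argue coordinate-blockwise, in the spirit of the proof of Theorem~\ref{thm:products}, the difference being that here the block rows of $H$ share their row indices, so the product structure weakens to the one-sided containments (1) and (2). Throughout I will write $H_k\in\F_2^{r\times n_k}$ with $\sum_{k=1}^t n_k=n$ and decompose a vector $\mathbf{v}\in\R^n$ as $\mathbf{v}=(\mathbf{v}_1,\dots,\mathbf{v}_t)$ with $\mathbf{v}_k\in\R^{n_k}$, so that $\row_j(H)\cdot\mathbf{v}=\sum_{k=1}^t\row_j(H_k)\cdot\mathbf{v}_k$ for each $j\in[r]$.

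For (1) I would start from $\mathbf{v}\in\mathcal{K}(H_1)\times\dots\times\mathcal{K}(H_t)\subseteq\R_{\geq 0}^n$ and verify the defining inequalities of $\mathcal{K}(H)$. Fix $j\in[r]$ and a coordinate of $\mathbf{v}$ lying in block $k_0$, say the $i'$-th coordinate of that block; the relevant entry of $\row_j(H)$ is then $(H_{k_0})_{j,i'}$, so I must check $\row_j(H)\cdot\mathbf{v}\geq 2(H_{k_0})_{j,i'}(\mathbf{v}_{k_0})_{i'}$. Since every summand $\row_j(H_k)\cdot\mathbf{v}_k$ is nonnegative, $\row_j(H)\cdot\mathbf{v}\geq\row_j(H_{k_0})\cdot\mathbf{v}_{k_0}$, and $\row_j(H_{k_0})\cdot\mathbf{v}_{k_0}\geq 2(H_{k_0})_{j,i'}(\mathbf{v}_{k_0})_{i'}$ is exactly the statement $\mathbf{v}_{k_0}\in\mathcal{K}(H_{k_0})$. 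Part (2) is then a quick consequence of (1) and Theorem~\ref{thm:equiv}: given $\mathbf{p}=(\mathbf{p}_1,\dots,\mathbf{p}_t)$ with $\mathbf{p}_k\in\mathcal{P}(H_k)$ one gets $\mathbf{p}\in\mathcal{K}(H)$ from (1), $H\mathbf{p}^T=\sum_{k=1}^t H_k\mathbf{p}_k^T=\mathbf{0}\bmod 2$, and $\mathbf{p}\in\N^n$, whence $\mathbf{p}\in\mathcal{P}(H)$.

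For (3) the first step is to observe that evaluating $f_H$ at $\mathbf{x}_k=\mathbf{0}$ for all $k\neq i$ annihilates precisely those monomials $\mathbf{x}^{\mathbf{p}}$ whose exponent vector $\mathbf{p}$ has a nonzero coordinate outside block $i$ --- here it is worth recording the convention $0^0=1$, so that coordinates of $\mathbf{p}$ that already vanish do not kill the monomial --- leaving $\sum\mathbf{x}_i^{\mathbf{p}_i}$ summed over the $\mathbf{p}\in\mathcal{P}(H)$ of the shape $(\mathbf{0},\dots,\mathbf{0},\mathbf{p}_i,\mathbf{0},\dots,\mathbf{0})$. So it remains to show that $\mathbf{p}_i\mapsto(\mathbf{0},\dots,\mathbf{p}_i,\dots,\mathbf{0})$ maps $\mathcal{P}(H_i)$ bijectively onto that set. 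One inclusion is (2) with $\mathbf{p}_k=\mathbf{0}$ for $k\neq i$, using that $\mathbf{0}\in\mathcal{P}(H_k)$ always. For the reverse, if $(\mathbf{0},\dots,\mathbf{p}_i,\dots,\mathbf{0})\in\mathcal{P}(H)$ then $H_i\mathbf{p}_i^T=H\mathbf{p}^T=\mathbf{0}\bmod 2$, and restricting the $\mathcal{K}(H)$-inequalities to the coordinates in block $i$ --- where $\row_j(H)\cdot\mathbf{p}=\row_j(H_i)\cdot\mathbf{p}_i$ because the remaining blocks of $\mathbf{p}$ are zero --- shows $\mathbf{p}_i\in\mathcal{K}(H_i)$; with $\mathbf{p}_i\in\N^{n_i}$, Theorem~\ref{thm:equiv} gives $\mathbf{p}_i\in\mathcal{P}(H_i)$. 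Summing $\mathbf{x}_i^{\mathbf{p}_i}$ over this set produces $f_{H_i}(\mathbf{x}_i)$, as claimed.

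I expect (1) and (2) to be entirely routine once the nonnegativity of the partial inner products $\row_j(H_k)\cdot\mathbf{v}_k$ is noted. The main thing to get right is the bookkeeping in (3): being explicit about the $0^0=1$ convention so that zero coordinates of $\mathbf{p}$ outside block $i$ do not erase a monomial, and confirming that the surviving monomials biject with $\mathcal{P}(H_i)$ via the natural inclusion of the $i$-th block of coordinates, rather than merely agreeing in count.
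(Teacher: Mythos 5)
Your proofs of (1) and (2) are correct and follow essentially the same route the paper takes: for (1) you drop the nonnegative cross-block contributions to reduce to a single-block inequality, and for (2) you combine (1) with Theorem~\ref{thm:equiv} and the observation that $H\mathbf{p}^T=\sum_k H_k\mathbf{p}_k^T$.

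For (3), your proof is actually \emph{more} complete than the one in the paper. After the substitution $\mathbf{x}_k=\mathbf{0}$ for $k\neq i$, the surviving polynomial is the sum of $\mathbf{x}_i^{\mathbf{p}_i}$ over all $\mathbf{p}=(\mathbf{0},\dots,\mathbf{p}_i,\dots,\mathbf{0})\in\mathcal{P}(H)$, and equality with $f_{H_i}(\mathbf{x}_i)$ requires two inclusions between $\{\mathbf{p}_i : (\mathbf{0},\dots,\mathbf{p}_i,\dots,\mathbf{0})\in\mathcal{P}(H)\}$ and $\mathcal{P}(H_i)$. The paper argues only the direction that every term of $f_{H_i}(\mathbf{x}_i)$ survives (essentially your forward inclusion, via (2) together with $\mathbf{0}\in\mathcal{P}(H_k)$). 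You additionally supply the converse: if $(\mathbf{0},\dots,\mathbf{p}_i,\dots,\mathbf{0})\in\mathcal{P}(H)$ then, because the other blocks of $\mathbf{p}$ vanish, $\row_j(H)\cdot\mathbf{p}=\row_j(H_i)\cdot\mathbf{p}_i$, so the $\mathcal{K}(H)$-inequalities restricted to block $i$ give $\mathbf{p}_i\in\mathcal{K}(H_i)$, and $H_i\mathbf{p}_i^T=H\mathbf{p}^T=\mathbf{0}\bmod 2$ together with $\mathbf{p}_i\in\N^{n_i}$ and Theorem~\ref{thm:equiv} gives $\mathbf{p}_i\in\mathcal{P}(H_i)$. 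This is exactly the missing step, and your explicit handling of the $0^0=1$ convention is the right bookkeeping to make the substitution argument airtight. Nice catch.
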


\begin{proof}
Suppose $H_k \in \F_2^{r \times n_k}$ so that $n = \sum_{k=1}^t n_k$. Let $\mathbf{w}=(\mathbf{v}_1,\ldots,\mathbf{v}_t)\in\mathcal{K}(H_1) \times \dotsm \times \mathcal{K}(H_t)$.
    Because $\mathbf{v}_k \in \mathcal{K}(H_k)$ for all $k \in [t]$, for all $j \in [r]$ and for all $i \in [n_k]$, 
    \[ \text{Row}_j (H_k) \cdot \mathbf{v}_k \geq 2 (H_k)_{ji} v_{ki}.\]
    For all $i \in [n]$, $j \in [r]$, and for any $k \in [t]$,
    \begin{align*}
        \text{Row}_j (H) \cdot \mathbf{w} &= \text{Row}_j (H_1) \cdot \mathbf{v}_1 + \dots + \text{Row}_j (H_t) \cdot \mathbf{v}_t \\
        &\geq 2 (H_k)_{ji} v_{ki}.
    \end{align*}
    Hence, $\mathbf{w} \in \mathcal{K}(H)$. It is an immediate consequence that $\mathcal{K}(H_1) \times \dotsm \times \mathcal{K}(H_t) \subseteq \mathcal{K}(H)$, proving 1).

    Let $\mathbf{p} = (\mathbf{p}_1, \dotsc, \mathbf{p}_t) \in \mathcal{P}(H_1) \times \dotsm \times \mathcal{P}(H_t)$. Using 1) and Theorem~\ref{thm:equiv}, $\mathbf{p}_i \in \mathcal{K}(H_i)$ and $H_i\mathbf{p}_i^T = \mathbf{0} \pmod 2$ for all $i \in [t]$. Thus, $H \mathbf{p}^T = \sum_{k=1}^t H_k \mathbf{p}_k^T = \mathbf{0} \pmod 2$ and so $\mathbf{p} \in \mathcal{P}(H)$, proving 2).
    
    To prove 3), let $i \in [t]$. It is clear from 2) that $f_H(\mathbf{0}, \dotsc, \mathbf{0}, \mathbf{x}_i, \mathbf{0}, \dotsc, \mathbf{0})$ contains all the terms from $f_{H_i}(\mathbf{x}_i)$. Let $\mathbf{x}_i^{\mathbf{p}_i}$ be a term of $f_{H_i}(\mathbf{x}_i)$. Because $(\mathbf{0}, \dotsc, \mathbf{0}, \mathbf{p}_i, \mathbf{0}, \dotsc \mathbf{0}) \in \mathcal{P}(H)$, $\mathbf{x}_i^{\mathbf{p}_i}$ is also a term of $f_H(\mathbf{x})$. In particular, $\mathbf{x}_i^{p_i}$ has no entries from $\mathbf{x}_1, \dotsc, \mathbf{x}_{i-1}, \mathbf{x}_{i+1}, \dotsc, \mathbf{x}_t$, and so $\mathbf{x}_i^{\mathbf{p}_i}$ is a term of $f_H(\mathbf{0}, \dotsc, \mathbf{0}, \mathbf{x}_i, \mathbf{0}, \dotsc, \mathbf{0})$, proving 3).
\end{proof}

When the matrices $H_i$ in Theorem~\ref{thm:blockrow} are all the same matrix $H$, we can provide an additional characterization of elements of the fundamental cone based on the elements of the fundamental cone of $H$.

\begin{thm} \label{thm:h-to-hhhh}
    Let $H \in \F_2^{r \times n}$, $\mathbf{v} \in \mathcal{K}(H)$, $H' = \begin{bmatrix} H & \dotsm & H \end{bmatrix} \in \F_2^{r \times tn}$, and
    \[\mathbf{w} = (w_{11}, w_{12}, \dotsc, w_{1n}, w_{21}, \dotsc, w_{2n}, \dotsc, w_{t1}, \dotsc, w_{tn}) \in \R^{tn}_{\geq 0}.\]
    If $w_{ki} \leq v_i \leq \sum_{j=1}^t w_{ji}$ for all $i \in [n]$ and $k \in [t]$, then $\mathbf{w} \in \mathcal{K}(H')$.
\end{thm}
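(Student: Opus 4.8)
The plan is to verify the defining inequalities of $\mathcal{K}(H')$ directly from Definition~\ref{def:cone}. Since $H' = \begin{bmatrix} H & \dotsm & H \end{bmatrix}$ is $r \times tn$, index its columns by pairs $(k,i)$ with $k \in [t]$, $i \in [n]$, so that the $(k,i)$ column of $H'$ equals the $i$th column of $H$; in particular $(H')_{\ell,(k,i)} = h_{\ell i}$, and $\row_\ell(H')$ is just $t$ side-by-side copies of $\row_\ell(H)$. Writing $\mathbf{w}_j = (w_{j1}, \dotsc, w_{jn})$ and $\mathbf{s} = \sum_{j=1}^t \mathbf{w}_j$ (so that $s_i = \sum_{j=1}^t w_{ji}$), this gives
\[ \row_\ell(H') \cdot \mathbf{w} \;=\; \sum_{j=1}^t \row_\ell(H) \cdot \mathbf{w}_j \;=\; \row_\ell(H) \cdot \mathbf{s} \]
for every $\ell \in [r]$. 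Since $\mathbf{w} \in \R_{\geq 0}^{tn}$ by hypothesis, it remains to check that $\row_\ell(H) \cdot \mathbf{s} \geq 2 h_{\ell i} w_{ki}$ for all $\ell \in [r]$, $k \in [t]$, $i \in [n]$. When $h_{\ell i} = 0$ the right-hand side is $0$ and the left-hand side is nonnegative (as $H$ has $0/1$ entries and $\mathbf{s} \in \R_{\geq 0}^n$), so that case is immediate.

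The substantive case is $h_{\ell i} = 1$, where I would chain the two hypotheses. First, the upper bound $v_{i'} \leq \sum_{j=1}^t w_{ji'} = s_{i'}$ holds for every coordinate $i' \in [n]$, and $\row_\ell(H)$ has nonnegative entries, so $\row_\ell(H) \cdot \mathbf{s} \geq \row_\ell(H) \cdot \mathbf{v}$. Next, $\mathbf{v} \in \mathcal{K}(H)$ applied to the pair $(\ell, i)$ gives $\row_\ell(H) \cdot \mathbf{v} \geq 2 h_{\ell i} v_i = 2 v_i$. Finally, the lower bound hypothesis gives $v_i \geq w_{ki}$. Combining,
\[ \row_\ell(H') \cdot \mathbf{w} \;=\; \row_\ell(H) \cdot \mathbf{s} \;\geq\; \row_\ell(H) \cdot \mathbf{v} \;\geq\; 2 v_i \;\geq\; 2 w_{ki} \;=\; 2 (H')_{\ell,(k,i)} w_{ki}. \]
As $\ell$, $k$, $i$ were arbitrary, this proves $\mathbf{w} \in \mathcal{K}(H')$.

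I do not expect a genuine obstacle here: the argument is essentially bookkeeping, and the one point that needs care is noticing that the single aggregate vector $\mathbf{s}$ does double duty -- it realizes the left-hand side $\row_\ell(H') \cdot \mathbf{w}$ as $\row_\ell(H) \cdot \mathbf{s}$, and the two-sided constraint $w_{ki} \leq v_i \leq s_i$ is exactly calibrated so that the upper inequality steps from $\mathbf{s}$ down to $\mathbf{v}$, where the fundamental-cone inequality for $H$ is available, while the lower inequality steps from $v_i$ down to the lone coordinate $w_{ki}$ appearing on the right. One could alternatively phrase this via Lemma~\ref{lem:intersection} by treating the rows of $H'$ one block at a time, but since every row of $H'$ already has the same block structure, the direct coordinate computation above is cleanest.
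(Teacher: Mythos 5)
Your proof is correct and takes essentially the same approach as the paper's: both rewrite $\row_j(H') \cdot \mathbf{w}$ as $\row_j(H) \cdot \mathbf{s}$ with $\mathbf{s}$ the coordinatewise sum of the blocks, then chain $\row_j(H) \cdot \mathbf{s} \geq \row_j(H) \cdot \mathbf{v} \geq 2 h_{ji} v_i \geq 2 h_{ji} w_{ki}$. The only cosmetic difference is that you separate the $h_{\ell i} = 0$ case, whereas the paper's single chain already handles it because $2 h_{ji} v_i \geq 2 h_{ji} w_{ki}$ holds trivially when $h_{ji} = 0$.
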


\begin{proof}
    Because $\mathbf{v} \in \mathcal{K}(H)$, for all $j \in [r]$ and all $i \in [n]$, 
    \[ \text{Row}_j (H) \cdot \mathbf{v} \geq 2 h_{ji} v_i. \]
    We can rewrite:
    \begin{align*}
        \text{Row}_j (H) \cdot \mathbf{v} &\leq \text{Row}_j (H) \cdot \left( \sum_{k=1}^{t} w_{k1}, \dotsc, \sum_{k=1}^{t} w_{kn} \right) \\
        &= \text{Row}_j(H) \cdot (w_{11}, \dotsc, w_{1n}) + \dotsm + \text{Row}_j(H) \cdot (w_{t1}, \dotsc, w_{tn}) \\
        &= \text{Row}_j(H') \cdot \mathbf{w}
    \end{align*}
    Putting everything together,
    \[ \text{Row}_j(H') \cdot \mathbf{w} \geq \text{Row}_j(H) \cdot \mathbf{v} \geq 2 h_{ji} v_i \geq 2 h_{ji} w_{ki}\]
    for all $i \in [n]$, $j \in [r]$, and $k \in [t]$. Hence $\mathbf{w} \in \mathcal{K}(H')$.
\end{proof}

The next result provides a restriction on the fundamental cone of the code from a parity-check matrix which has been augmented by a single additional column.

\begin{pr} \label{pr:column}
    Let $H_1 \in \F_2^{r \times n}$, $\mathbf{s} \in \F_2^r$, $\supp(\mathbf{s}) = S \subseteq [r]$, and $H = \begin{bmatrix} H_1 & \mathbf{s}^T \end{bmatrix}$. Then
    \[   \left\lbrace (\mathbf{v},w) \in \mathcal{K}(H_1) \times \R_{\geq 0} : w \leq \row_j(H_1) \cdot \mathbf{v} \ \forall j \in S \right\rbrace \subseteq \mathcal{K}(H).\]
Moreover, for $H = \begin{bmatrix} H_1 & J \end{bmatrix}$ where $J \in \F_2^{r \times r}$ is a permutation matrix given by $\sigma \in S_r$ such that
\[J_{ji} = \begin{cases} 1 & \text{if }\sigma(j) = i, \\ 0 & \text{otherwise,} \end{cases}\]
we have
    \[\hspace*{-2pt} \left\lbrace (\mathbf{v}, \mathbf{w}) \in \mathcal{K}(H_1) \times \R_{\geq 0}^r \!:\! w_{\sigma(j)} \leq \row_j (H_1) \cdot \mathbf{v} \ \forall j \in [r] \right\rbrace \subseteq  \mathcal{K}(H).\]
  \end{pr}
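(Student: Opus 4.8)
The plan is to verify directly that every point of the claimed set satisfies the defining inequalities of $\mathcal{K}(H)$ from Definition~\ref{def:cone}, via a short case analysis on the column index and on the entries of $\mathbf{s}$ (resp.\ of $J$). Since $\mathcal{K}(H)$ is cut out by the finitely many inequalities $\row_j(H)\cdot\mathbf{u} \geq 2 h_{ji} u_i$ over $i\in[n+1]$ (resp.\ $i\in[n+r]$) and $j\in[r]$, all that is needed is to check each of these against the hypotheses.

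First I would fix $(\mathbf{v}, w) \in \mathcal{K}(H_1) \times \R_{\geq 0}$ with $w \leq \row_j(H_1) \cdot \mathbf{v}$ for all $j \in S$, and record that since $H = \begin{bmatrix} H_1 & \mathbf{s}^T \end{bmatrix}$ we have $\row_j(H) \cdot (\mathbf{v}, w) = \row_j(H_1) \cdot \mathbf{v} + s_j w$ for every $j \in [r]$. For a coordinate $i \in [n]$ the inequality to check is $\row_j(H_1)\cdot\mathbf{v} + s_j w \geq 2(H_1)_{ji} v_i$; this holds because $\mathbf{v} \in \mathcal{K}(H_1)$ already gives $\row_j(H_1)\cdot\mathbf{v} \geq 2(H_1)_{ji} v_i$ and $s_j w \geq 0$. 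For the new coordinate $i = n+1$ the right-hand side is $2 s_j w$: if $s_j = 0$ the inequality is $\row_j(H_1)\cdot\mathbf{v} \geq 0$, immediate from $\mathbf{v}\in\R_{\geq 0}^n$; if $s_j = 1$, i.e.\ $j \in S$, it reduces to $\row_j(H_1)\cdot\mathbf{v} + w \geq 2w$, that is $w \leq \row_j(H_1)\cdot\mathbf{v}$, which is exactly the standing hypothesis. Hence $(\mathbf{v},w) \in \mathcal{K}(H)$.

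For the ``moreover'' statement the argument is structurally identical once the indexing is unwound. Writing a point as $(\mathbf{v}, \mathbf{w})$ with $\mathbf{v}\in\R_{\geq 0}^n$, $\mathbf{w}\in\R_{\geq 0}^r$, the $j$-th row of $J$ is the standard basis vector supported at position $\sigma(j)$, so $\row_j(H)\cdot(\mathbf{v},\mathbf{w}) = \row_j(H_1)\cdot\mathbf{v} + w_{\sigma(j)}$. The inequalities indexed by the first $n$ coordinates hold as before, since $\mathbf{v}\in\mathcal{K}(H_1)$ and $w_{\sigma(j)}\geq 0$. For the coordinate $n+\ell$ with $\ell\in[r]$, the entry $J_{j\ell}$ equals $1$ precisely when $\sigma(j)=\ell$, which occurs for the single row $j=\sigma^{-1}(\ell)$; for all other $j$ the inequality is $\row_j(H)\cdot(\mathbf{v},\mathbf{w})\geq 0$, and for $j=\sigma^{-1}(\ell)$ it becomes $\row_j(H_1)\cdot\mathbf{v} + w_\ell \geq 2w_\ell$, i.e.\ $w_{\sigma(j)}\leq\row_j(H_1)\cdot\mathbf{v}$, the hypothesis at that row. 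There is no genuine obstacle here — the content is pure bookkeeping; the one point to be careful about in the ``moreover'' part is that because $\sigma$ is a bijection, each new coordinate of $\mathbf{w}$ is constrained by exactly one row of $H$, so the single family $\{\, w_{\sigma(j)} \leq \row_j(H_1)\cdot\mathbf{v} : j\in[r]\,\}$ already covers all $r$ of the new inequalities and no compatibility conditions among distinct rows are needed.
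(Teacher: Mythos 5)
Your proof is correct and takes essentially the same direct verification of the cone inequalities as the paper's. The one minor difference is in the ``moreover'' part: the paper deduces it by applying the single-column case iteratively (adding one column of $J$ at a time), whereas you verify all $r$ new coordinates in a single pass using the bijectivity of $\sigma$ — both are sound, and yours is somewhat more explicit.
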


\begin{proof}
    Let $(\mathbf{v},w) \in \mathcal{K}(H_1) \times \R_{\geq 0}$ where $w \leq \row_j(H_1) \cdot \mathbf{v}$ for all $j \in S$.
    For all $j \in [r] \setminus S$ and $i \in [n]$,
    \[ \row_j(H) \cdot (\mathbf{v},w) = \row_j(H_1) \cdot \mathbf{v} + 0 \cdot w \geq 2 (H_1)_{ji}v_i = 2(H)_{ji} v_i, \]
    and for $i = n+1$,
    \[ \row_j(H) \cdot (\mathbf{v},w) = \row_j(H_1) \cdot \mathbf{v} + 0 \cdot w \geq 0 = 2(H)_{ji} w.\]
    Similarly, for all $j \in S$ and $i \in [n]$, 
    \[ 
    \row_j(H) \cdot (\mathbf{v},w) \geq \row_j(H_1) \cdot \mathbf{v} \geq 2 (H_1)_{ji} v_i + 1 \cdot w \geq 2(H)_{ji} v_i, \]
    and for $i =n+1$,
    \[ \row_j(H) \cdot (\mathbf{v},w) = \row_j(H_1) \cdot \mathbf{v} + 1 \cdot w \geq 2 w = 2(H)_{ji} w, \]
    showing that $(\mathbf{v},w) \in \mathcal{K}(H)$.

    The result for $H = \begin{bmatrix} H_1 & J \end{bmatrix}$ follows from applying the previous argument repeatedly, since each column and row of $J$ has weight $1$.
\end{proof}

In general, the containments in Proposition~\ref{pr:column} are proper, as the next example illustrates.

\begin{exmp} \label{ex:hammingpsc}
    Consider the Hamming code $C(H)$ with
    \[ H = \begin{bmatrix} 1 & 0 & 1 & 1 & 1 & 0 & 0 \\
                           0 & 1 & 0 & 1 & 1 & 1 & 0 \\
                           0 & 0 & 1 & 0 & 1 & 1 & 1 \end{bmatrix}.
                           \]
    Then vector $\mathbf{v} = (2,0,0,1,1,0,1) \in \mathcal{K}(H)$. However, consider  $\mathbf{w}' = (\mathbf{w},w_8) = (2,0,0,2,1,0,1,2)$. For any augmented matrix $H' = \begin{bmatrix} H & \mathbf{s}^T \end{bmatrix}$ where $\mathbf{s} \in \F_2^3$ and $\supp(\mathbf{s}) \supseteq \{2\}$, $\mathbf{w}' \in \mathcal{K}(H')$. However, $\mathbf{w} \not\in \mathcal{K}(H)$.
\end{exmp}

We close this section by noting that there are other, arguably more sophisticated, ways to determine pseudocodewords or useful information from a code. In particular, the classical weight enumerator is a specialized generating function that captures the distribution of codeword weights. For cycle codes, there is a combinatorial connection between the generating function of the pseudocodewords and the edge zeta function of the Tanner graph \cite{KLVW}. According to \cite{BP}, $f_H({\bf x})$ has a representation as a short rational function, meaning that it may be compactly represented despite having many terms. This allows a simple yet precise enumeration for the pseudocodewords. Here, Lemma~\ref{lem:intersection} can be used to derive $f_H({\bf x})$ from a collection of generating functions in rational form.

\begin{pr}
 Consider
    \[H = \begin{bmatrix} H_1 \\ H_2 \\ \vdots \\ H_t \end{bmatrix} \in \F_2^{r \times n}\]
 and fix $s$, the maximum number of binomials in each partial fraction of $f_{H_i}({\bf x})$. Then there exists a polynomial time algorithm that computes the generating function $f_H({\bf x})$ of the pseudocodewords of $H$, given the generating functions $f_{H_i}({\bf x})$ for each $\mathcal{P}(H_i)$, $i \in [t]$. 
\end{pr}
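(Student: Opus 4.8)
The plan is to separate the combinatorial content from the algorithmic content. First I would establish the set identity $\mathcal{P}(H) = \bigcap_{i=1}^t \mathcal{P}(H_i)$. For $\mathbf{p} \in \N^n$, Theorem~\ref{thm:equiv} gives $\mathbf{p} \in \mathcal{P}(H)$ exactly when $\mathbf{p} \in \mathcal{K}(H)$ and $H\mathbf{p}^T = \mathbf{0} \bmod 2$. By Lemma~\ref{lem:intersection}, $\mathcal{K}(H) = \bigcap_{i=1}^t \mathcal{K}(H_i)$, and because the rows of $H$ are precisely the rows of $H_1, \dotsc, H_t$ listed together, $H\mathbf{p}^T \equiv \mathbf{0}$ holds if and only if $H_i \mathbf{p}^T \equiv \mathbf{0}$ for every $i$. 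Applying Theorem~\ref{thm:equiv} once more, now to each $H_i$ separately, we obtain $\mathbf{p} \in \mathcal{P}(H)$ if and only if $\mathbf{p} \in \mathcal{P}(H_i)$ for all $i$, which is the claimed identity. Passing to generating functions and writing $\odot$ for the Hadamard (coefficientwise) product, this yields
\[ f_H(\mathbf{x}) = f_{H_1}(\mathbf{x}) \odot f_{H_2}(\mathbf{x}) \odot \dotsm \odot f_{H_t}(\mathbf{x}),\]
since the Hadamard product of the indicator generating functions of several sets is the indicator generating function of their intersection; this is valid over $\F_2$ because every coefficient in sight is $0$ or $1$.

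The algorithmic step is then to compute this $t$-fold Hadamard product from the given short rational forms. It is convenient to work over $\Q$: the sets $\mathcal{P}(H_i)$ and their intersection do not depend on the ground field, the $\Q$-generating function of $\mathcal{P}(H_i)$ has the same rational structure (in particular at most $s$ binomials per partial fraction), and $f_H$ is the reduction modulo $2$ of the $\Q$-generating function of $\mathcal{P}(H)$. I would then invoke the results of \cite{BP} on short rational generating functions: with the ambient dimension $n$ held fixed (as in the discussion preceding the statement) and the bound $s$ fixed, there is a polynomial-time algorithm that, given short rational representations of $g(S_1;\mathbf{x})$ and $g(S_2;\mathbf{x})$, returns one for $g(S_1 \cap S_2;\mathbf{x})$. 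Running this algorithm iteratively --- on $f_{H_1}$ and $f_{H_2}$, then on the output together with $f_{H_3}$, and so on through $f_{H_t}$ --- and finally reducing coefficients modulo $2$ produces a rational representation of $f_H(\mathbf{x})$.

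The step I expect to be the main obstacle is the complexity accounting across the $t-1$ iterations: each Hadamard-product step can enlarge both the number of partial-fraction terms and the number of binomials per term, so one must verify that the size of the representation remains polynomial in the total input length $\sum_{i=1}^t |f_{H_i}|$. This is precisely where the hypotheses that $n$ and $s$ are fixed are used, together with the uniform bounds of \cite{BP} on how these parameters propagate under intersection. A clean way to confirm that the overall procedure is genuinely polynomial is to bypass the iteration: by Lemma~\ref{lem:intersection}, $\mathcal{K}(H)$ is a single rational polyhedral cone cut out by all the inequalities coming from $H_1, \dotsc, H_t$, while $\{\mathbf{p} \in \N^n : H\mathbf{p}^T \equiv \mathbf{0} \bmod 2\}$ is $\N^n$ intersected with a finite-index sublattice of $\Z^n$; Barvinok's algorithm yields short rational generating functions for the lattice points of each in polynomial time, and one further application of the intersection algorithm of \cite{BP} combines the two into $f_H(\mathbf{x})$. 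Either route --- iterating over the supplied $f_{H_i}$, or the two-cone computation followed by a single Hadamard product --- gives the asserted polynomial-time algorithm.
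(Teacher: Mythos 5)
Your argument takes essentially the same route as the paper's one-sentence proof, which invokes Lemma~\ref{lem:intersection} together with \cite[Theorem~3.6]{BW} (Barvinok--Woods), and you fill in the details correctly: the set identity $\mathcal{P}(H)=\bigcap_{i=1}^t\mathcal{P}(H_i)$ (via Lemma~\ref{lem:intersection} and Theorem~\ref{thm:equiv}), the translation to a Hadamard product of $0/1$-coefficient generating functions, and the appeal to the polynomial-time intersection algorithm for short rational generating functions in fixed dimension.

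Two remarks. First, you attribute the algorithmic intersection result to \cite{BP}, but \cite{BP} is the reference the paper uses only for the \emph{existence} of a short rational representation of $f_H$; the intersection (Hadamard-product) algorithm is \cite[Theorem~3.6]{BW}. Second, the complexity concern you raise about iterating the pairwise intersection $t-1$ times is legitimate: the Barvinok--Woods Intersection Lemma guarantees a bound on the number of binomials per term in the \emph{output} that depends on $n$ and the input bound $s$ but need not equal $s$, so naively iterating could let the parameter grow with $t$; neither the paper's terse proof nor your write-up fully resolves this. Your proposed workaround (run Barvinok's algorithm directly on the cone $\mathcal{K}(H)$ and the congruence lattice, then intersect once) does give a clean polynomial-time algorithm, but it ignores the supplied $f_{H_i}$ and makes the hypothesis on $s$ vacuous, so it proves the conclusion by a somewhat different, input-oblivious route than the statement seems to intend. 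The core mathematical content, however, matches the paper.
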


\begin{proof}
   This follows from Lemma~\ref{lem:intersection} and \cite[Theorem 3.6]{BW}.
\end{proof}

\section{Properties of the fundamental cone for specific classes of codes} \label{sec:classes}

In this section, we will apply the general theory we developed in Section~\ref{sec:fund_cone} to three specific classes of LDPC codes: quantum stabilizer codes in Section~\ref{sec:quantum}, quasi-cyclic codes in Section~\ref{sec:qc}, and spatially-coupled codes in Section~\ref{sec:sc}.

\subsection{Pseudocodewords of quantum stabilizer codes} \label{sec:quantum}

This subsection explores the structure of pseudocodewords from quantum stabilizer codes, with a focus on CSS codes, obtained from the celebrated Calderbank-Shor-Steane construction \cite{CS96,S96}. We will see that through the use of label codes, which are binary vector representations of a stabilizer and its normalizer, we may apply the framework from Section \ref{sec:fund_cone} to stabilizer and normalizer codes \cite{CRSS98,Nielsen_Chuang_2010}.

We begin by outlining the necessary concepts to understand quantum stabilizer codes and refer readers to \cite{marinescu2011classical} for a review on quantum information. Consider the Pauli matrices 
\[ X = \begin{bmatrix} 0 & 1 \\ 1 & 0 \end{bmatrix}, \quad Y = \begin{bmatrix} 0 & -i \\ i & 0 \end{bmatrix}, \quad Z = \begin{bmatrix} 1 & 0 \\ 0 & -1 \end{bmatrix} \]
and the $n$-qubit Pauli group
$$\mathcal G_n: = \left\{ c \mathcal P_1 \otimes \dots \otimes \mathcal P_n : c \in \left\{ \pm 1, \pm i \right\}, \mathcal P_i \in \left\{ X, Y, Z, I \right\} \right\},$$
where $I$ is the $2 \times 2$ identity matrix. Instead of working directly with stabilizer quantum error correcting codes, we consider a code's stabilizer label code and normalizer label code. As in \cite{Li_Vontobel_ISIT}, we make use of the map $\ell: \mathcal G_1 \rightarrow \F_2^2$ given by 
\begin{align*}
    X &\mapsto (1,0) & Z &\mapsto (0,1) \\
    Y &\mapsto (1,1) & I &\mapsto (0,0).
\end{align*}
For $i \in [2]$, let $\pi_i:\F_2^n \rightarrow  \F_2$ be the projection map given by $\mathbf{x}  \mapsto  x_i$. Further, define $\phi_X$ and $\phi_Z$ as follows:
$$
\begin{array}{lccc}
\phi_X: &\mathcal G_n &\rightarrow &\F_2^{n}\\
& \mathcal P_1 \otimes \dots \otimes \mathcal P_n & \mapsto & \left( \pi_1 \circ \ell(\mathcal P_1), \dots , \pi_1 \circ \ell(\mathcal P_n) \right)
\end{array}
$$
and 
$$
\begin{array}{lccc}
\phi_Z: &\mathcal G_n &\rightarrow &\F_2^{n}\\
& \mathcal P_1 \otimes \dots \otimes \mathcal P_n & \mapsto & \left( \pi_2 \circ \ell(\mathcal P_1), \dots , \pi_2 \circ \ell(\mathcal P_n) \right).
\end{array}
$$

Let $S:= \langle g_1, \dotsc, g_r\rangle \leq \mathcal G_n$. Associated with $S$ is the binary label code $A$ \cite[Definition 2]{Li_Vontobel_ISIT}, which is a binary code with generator matrix 
$$
A(S):=\left[ 
\begin{array}{cc}
\phi_X (g_1) &\phi_Z(g_1) \\
\phi_X (g_2) &\phi_Z(g_2) \\
\vdots & \vdots \\
\phi_X (g_r) &\phi_Z(g_r) \end{array}
\right] \in \F_2^{r \times 2n}.
$$
We typically assume no redundancy so that $r = n-k$. For convenience, we will sometimes write $A(S):=[\phi_X(A) \mid  \phi_Z(A)]$. The binary normalizer label code $\mathcal{N}$ \cite[Definition 4]{Li_Vontobel_ISIT} is the dual code of $A$ under the symplectic inner product. Equivalently, $\mathcal{N}$ is the code where $A(S)$ is the parity-check matrix, i.e. 
$$\mathcal N(S):=C \left([\phi_X(A) \mid  \phi_Z(A)] \right) \in \F_2^{2n}.$$
This correspondence will be useful in our analysis of quantum stabilizer codes. 

The codeword polytope of $\mathcal N(S)$ is the convex hull of the codewords of the associated binary normalizer label code: $$\conv(\mathcal N (S) ) = \poly(\mathcal N(S)) \subseteq \R^{2n}.$$ 
As discussed in Section~\ref{sec:preliminaries}, this polytope depends only on the set of codewords and not the representation of the parity-check matrix. While this polytope gives the precise blockwise MAP decoding in the setting considered in \cite{Li_Vontobel_ISIT}, finding its precise description is often challenging. The tools developed in Section \ref{sec:fund_cone} for the fundamental cone and relaxed polytope of traditional binary codes apply immediately to quantum normalizer label codes. 

We can now apply these results to stabilizer codes by using the normalizer. The following result follows immediately from the definition of $A(S)$ and Lemma~\ref{lem:intersection}.

\begin{thm} \label{thm:stabilizer}
The fundamental cone of the normalizer code $C$ given by 
 $S:=\left< g_1, \dots, g_r \right> \leq \mathcal G_n$ is 
    \[  \bigcap_{i=1}^r \mathcal{K}( \left[ \phi_X(g_i) \ \phi_Z(g_i) \right]).\]
\end{thm}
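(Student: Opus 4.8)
The plan is to recognize Theorem~\ref{thm:stabilizer} as essentially a restatement of Lemma~\ref{lem:intersection} applied to the specific block decomposition of the generator matrix $A(S)$, which serves as the parity-check matrix of the normalizer code $\mathcal{N}(S)$. The key observation is that $\mathcal{N}(S) = C(A(S))$ where $A(S) \in \F_2^{r \times 2n}$ has rows $\row_i(A(S)) = [\phi_X(g_i) \mid \phi_Z(g_i)]$ for $i \in [r]$. So the matrices $H_i := [\phi_X(g_i) \mid \phi_Z(g_i)] \in \F_2^{1 \times 2n}$ are precisely the single-row matrices formed by the rows of $H := A(S)$, and stacking them recovers $H$.

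First I would unwind the definitions: by definition $\mathcal{N}(S) = C([\phi_X(A) \mid \phi_Z(A)])$, so the fundamental cone of the normalizer code is $\mathcal{K}(A(S))$. Next I would write $A(S)$ as the vertical stack of the $H_i$ above, noting that each $H_i$ is a matrix formed by (a single) row of $A(S)$ and that $\sum_{i=1}^r 1 = r$ matches the row count, so the hypotheses of Lemma~\ref{lem:intersection} are met with $t = r$ and $r_\ell = 1$ for each $\ell$. Then Lemma~\ref{lem:intersection} yields directly
\[ \mathcal{K}(A(S)) = \bigcap_{i=1}^r \mathcal{K}(H_i) = \bigcap_{i=1}^r \mathcal{K}\bigl([\phi_X(g_i) \ \phi_Z(g_i)]\bigr), \]
which is the claimed formula.

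There is essentially no obstacle here; the result is a direct corollary. The only point requiring a sentence of care is the identification of $\mathcal{N}(S)$ with the null space of $A(S)$ — i.e., that "the normalizer code $C$ given by $S$" in the statement refers to $\mathcal{N}(S) = C(A(S))$ — but this is exactly the definition recalled just before the theorem, where $\mathcal{N}(S) := C([\phi_X(A) \mid \phi_Z(A)])$. So I would simply cite that definition, invoke Lemma~\ref{lem:intersection} with the row-wise block decomposition, and conclude. The proof is two or three lines.
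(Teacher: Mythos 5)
Your proposal matches the paper's proof exactly: the paper states that Theorem~\ref{thm:stabilizer} ``follows immediately from the definition of $A(S)$ and Lemma~\ref{lem:intersection},'' which is precisely the row-wise block decomposition you spell out. Your write-up is a slightly more detailed version of the same two-line argument, with the correct identification of $\mathcal{N}(S) = C(A(S))$ and the application of Lemma~\ref{lem:intersection} with $t=r$ and $r_\ell=1$.
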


The well-known Calderbank-Shor-Steane (CSS) code can be constructed as a stabilizer quantum error-correcting code whose label code has parity check matrix of the form 
\begin{equation} H = \begin{bmatrix} H_1 & 0 \\ 0 & H_2\end{bmatrix} \label{eq:css} \end{equation}
where $H_1, H_2 \in \F_2^{r \times n}$ are such that $H_1 H_2^T = {0}_{r \times r}$, meaning that $C(H_2)^\perp \subseteq C(H_1)$. By applying Theorem~\ref{thm:products} to Eq.~\ref{eq:css}, one obtains the following result on pseudocodewords of CSS codes.

\begin{thm} \label{pr:css}
    Let $C$ be a CSS code with parity-check matrix as given in Equation~\ref{eq:css}. Then the pseudocodewords of $C$ are given by $\mathcal{P}(H_1) \times \mathcal{P}(H_2)$ and the generating function of the pseudocodewords of $C$ is
    \[ f_H(\mathbf{x}) = \left(\sum_{\mathbf{p} \in \mathcal{P}(H_1)} x_1^{p_1} \dotsm x_n^{p_n} \right) \left( \sum_{\mathbf{p}' \in \mathcal{P}(H_2)} x_{n+1}^{p_1'} \dotsm x_{2n}^{p_n'} \right).\]
\end{thm}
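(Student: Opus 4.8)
The plan is to recognize that Theorem~\ref{pr:css} is an essentially immediate specialization of Theorem~\ref{thm:products} to the case $t = 2$, with the block matrices being $H_1$ and $H_2$ (each of size $r \times n$, so that $H \in \F_2^{2r \times 2n}$). First I would observe that the parity-check matrix in Equation~\ref{eq:css} is exactly of the block-diagonal form
\[ H = \begin{bmatrix} H_1 & 0 \\ 0 & H_2 \end{bmatrix} \]
treated in Theorem~\ref{thm:products}, so parts (3) and (4) of that theorem apply directly: $\mathcal{P}(H) = \mathcal{P}(H_1) \times \mathcal{P}(H_2)$, and $f_H(\mathbf{x}_1, \mathbf{x}_2) = f_{H_1}(\mathbf{x}_1) f_{H_2}(\mathbf{x}_2)$.

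Next I would translate the abstract product statement into the concrete coordinate-indexed form appearing in the theorem statement. Since the first block $H_1$ acts on coordinates $1, \dotsc, n$ and the second block $H_2$ acts on coordinates $n+1, \dotsc, 2n$, the variables $\mathbf{x}_1$ and $\mathbf{x}_2$ from Theorem~\ref{thm:products} become $(x_1, \dotsc, x_n)$ and $(x_{n+1}, \dotsc, x_{2n})$ respectively. Thus a pseudocodeword $\mathbf{p} \in \mathcal{P}(H_1)$ contributes the monomial $x_1^{p_1} \dotsm x_n^{p_n}$, and a pseudocodeword $\mathbf{p}' \in \mathcal{P}(H_2)$ contributes $x_{n+1}^{p_1'} \dotsm x_{2n}^{p_n'}$; multiplying the two sums gives precisely the displayed generating function $f_H(\mathbf{x})$. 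This is a routine re-indexing with no mathematical content beyond bookkeeping.

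There is really no main obstacle here — the only thing to be careful about is the indexing convention, namely making explicit that the CSS label code has length $2n$ with the $X$-part occupying the first $n$ positions and the $Z$-part the last $n$, matching the block structure, and noting that the condition $H_1 H_2^T = 0_{r \times r}$ (equivalently $C(H_2)^\perp \subseteq C(H_1)$) is what makes $H$ a legitimate CSS label-code parity-check matrix but plays no role in the pseudocodeword factorization itself, since Theorem~\ref{thm:products} imposes no orthogonality hypothesis on the blocks. So the proof is simply: \emph{Apply Theorem~\ref{thm:products} with $t = 2$ to the block-diagonal matrix in Equation~\ref{eq:css}; parts (3) and (4) give $\mathcal{P}(H) = \mathcal{P}(H_1) \times \mathcal{P}(H_2)$ and the product generating function, and rewriting the monomials in the coordinates $x_1, \dotsc, x_{2n}$ yields the stated formula.}
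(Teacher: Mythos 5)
Your proposal matches the paper's argument exactly: the paper proves Theorem~\ref{pr:css} by applying Theorem~\ref{thm:products} to the block-diagonal matrix in Equation~\ref{eq:css}, which is precisely what you do (with $t=2$, using parts (3) and (4) and re-indexing the variables). Your added remark that the CSS orthogonality condition $H_1 H_2^T = 0$ plays no role in the factorization is a correct and clarifying observation, but the core argument is the same.
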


Since Theorem~\ref{pr:css} allows us to handle submatrices of $H$ separately, as a result one can greatly simplify the process of finding the pseudocodewords of CSS codes. 

One well-known family of CSS codes are the Steane codes, which are CSS codes for which $H_1 = H_2 = H_r$, where $H_r$ is the parity-check matrix of the $[2^r-1, 2^r-r-1,3]$ Hamming code with $r \geq 3$. We illustrate the usefulness of Theorem~\ref{pr:css} with the first code in this family, where $r = 3$.

\begin{exmp} \label{ex:steane} The first code in the family of Steane codes is given by $\widetilde{H} = \begin{bmatrix} H & 0 \\ 0 & H\end{bmatrix}$ where $H$ is the parity-check matrix of the classical $[7,4,3]$ binary Hamming code $C(H)$. With the parity-check matrix representation $H$ as in Example~\ref{ex:hammingpsc}, Polymake \cite{polymake} computes there to be 42 edges of the fundamental cone and enumerates 96 vertices of the relaxed polytope of $C(H)$, which are exactly the LP pseudocodewords of $C(H)$. By Theorem~\ref{thm:products}, there are exactly $96^2 = 9216$ LP pseudocodewords of $C(\widetilde{H})$, and they are of the form $\mathbf{v} = (\mathbf{v}_1, \mathbf{v}_2)$ where $\mathbf{v}_1, \mathbf{v}_2 \in R(H)$.
\end{exmp}

\subsection{Pseudocodewords of quasi-cyclic codes} \label{sec:qc}

In this subsection, we first provide general results on pseudocodewords of classical quasi-cyclic codes. At the end of the section, we additionally use results from Section~\ref{sec:quantum} and apply these results to quasi-cyclic CSS codes.

A linear code $C$ is called a \textit{cyclic code} if every cyclic shift of a codeword in $C$ is also a codeword in $C$. Given $\mathbf{x} = (x_1, \dotsc, x_n)$,  
$ \mathbf{x}^i = (x_{n-i+1}, x_{n-i}, \dotsc, x_{n}, x_1, \dotsc, x_{n-i})$ is its cyclic shift $i$ positions to the right. A linear code is \textit{quasi-cyclic with shifting constraint $n_0$}  if  cyclically shifting a codeword a fixed number $n_0$ of symbol positions results in another codeword. Cyclic codes are quasi-cyclic codes with shifting constraint $n_0 = 1$.

Given shifting constraint $n_0$ and positive integers $c, t$ with $c < t$, let $H_i \in \F_q^{c \times n_0}$ for $i \in [t]$. The nullspace of the matrix $H \in \F_q^{tc \times t n_0}$ in Equation~\ref{eq:blockcirculant}, which is said to be in block circulant form, is a quasi-cyclic parity-check code.

\noindent\begin{minipage}{.47\linewidth}
\begin{equation}
   H = \begin{bmatrix} H_1 & H_2 & \dotsm & H_t \\
                       H_t & H_1 & \dotsm & H_{t-1} \\
                       \vdots & \vdots & \ddots & \vdots \\
                       H_2 & H_3 & \dotsm & H_1 \end{bmatrix}
                       \label{eq:blockcirculant}
\end{equation}
\end{minipage}%
\begin{minipage}{.52\linewidth}
\begin{equation}
   H = \begin{bmatrix} J_{1,1} & J_{1,2} & \dotsm & J_{1,n_0} \\
                         J_{2,1} & J_{2,2} & \dotsm & J_{2,n_0} \\
                         \vdots & \vdots & \ddots & \vdots \\
                         J_{c,1} & J_{c,2} & \dotsm & J_{c,n_0} \end{bmatrix}
                         \label{eq:circulant}
\end{equation}
\end{minipage}

Such a matrix is often represented as a matrix consisting of circulant matrices which is permutation equivalent to $H$. In this case, $H$ is a $c \times n_0$ block matrix composed of $t \times t$ circulants, as in Equation~\ref{eq:circulant}.

The next corollary follows immediately from Theorem~\ref{thm:blockrow} as well as Lemma~\ref{lem:intersection}.

\begin{co} \label{co:qc}
Let $H$ be the parity-check matrix of a quasi-cyclic parity-check code in circulant form, as in Equation~\ref{eq:circulant}. Then 
\[\bigcap_{j=1}^c \left(\mathcal{K}(J_{j,1}) \times \dotsm \times \mathcal{K}(J_{j,n_0})\right) = \prod_{i=1}^{n_0} \left( \bigcap_{j=1}^c \mathcal{K}(J_{j,i}) \right) \subseteq \mathcal{K}(H).\]
\end{co}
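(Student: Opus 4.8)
The plan is to deduce the corollary from the two structural results already established, treating the two claimed equalities/containments in turn. First I would observe that the matrix $H$ in Equation~\ref{eq:circulant}, written as a block matrix of circulants, has block rows $\row_j := \begin{bmatrix} J_{j,1} & J_{j,2} & \dotsm & J_{j,n_0} \end{bmatrix}$ for $j \in [c]$. Applying Lemma~\ref{lem:intersection} with these block rows playing the role of the $H_\ell$ gives $\mathcal{K}(H) = \bigcap_{j=1}^c \mathcal{K}\!\left(\begin{bmatrix} J_{j,1} & \dotsm & J_{j,n_0}\end{bmatrix}\right)$. Then for each fixed $j$, Theorem~\ref{thm:blockrow}(1) applied to the block row $\begin{bmatrix} J_{j,1} & \dotsm & J_{j,n_0}\end{bmatrix}$ yields $\mathcal{K}(J_{j,1}) \times \dotsm \times \mathcal{K}(J_{j,n_0}) \subseteq \mathcal{K}\!\left(\begin{bmatrix} J_{j,1} & \dotsm & J_{j,n_0}\end{bmatrix}\right)$. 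Intersecting these containments over $j \in [c]$ gives
\[ \bigcap_{j=1}^c \left(\mathcal{K}(J_{j,1}) \times \dotsm \times \mathcal{K}(J_{j,n_0})\right) \subseteq \bigcap_{j=1}^c \mathcal{K}\!\left(\begin{bmatrix} J_{j,1} & \dotsm & J_{j,n_0}\end{bmatrix}\right) = \mathcal{K}(H), \]
which is the claimed containment on the right.

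Next I would establish the equality $\bigcap_{j=1}^c \left(\mathcal{K}(J_{j,1}) \times \dotsm \times \mathcal{K}(J_{j,n_0})\right) = \prod_{i=1}^{n_0} \left( \bigcap_{j=1}^c \mathcal{K}(J_{j,i}) \right)$, which is a purely set-theoretic fact about intersections of Cartesian products: a tuple $(\mathbf{v}_1, \dotsc, \mathbf{v}_{n_0})$ lies in every product $\mathcal{K}(J_{j,1}) \times \dotsm \times \mathcal{K}(J_{j,n_0})$ precisely when, for each coordinate block $i$, the component $\mathbf{v}_i$ lies in $\mathcal{K}(J_{j,i})$ for all $j$, i.e. when $\mathbf{v}_i \in \bigcap_{j=1}^c \mathcal{K}(J_{j,i})$. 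One must check that the sizes of the blocks are consistent — each $J_{j,i}$ is a $t \times t$ circulant, so all $n_0$ coordinate blocks have the same ambient dimension $t$, and the factorization $\R_{\geq 0}^{tn_0} = \prod_{i=1}^{n_0} \R_{\geq 0}^t$ is unambiguous — but this is immediate from the block structure in Equation~\ref{eq:circulant}.

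Chaining the equality and the containment then gives the statement of the corollary. The main point requiring care is simply the bookkeeping: matching the indexing of the block rows of $H$ in circulant form to the hypotheses of Lemma~\ref{lem:intersection} (which partitions rows, not columns) and of Theorem~\ref{thm:blockrow} (which partitions columns), and confirming that the permutation equivalence mentioned after Equation~\ref{eq:circulant} does not affect the fundamental cone in a way that matters here — indeed, the corollary is stated directly for $H$ in circulant form, so no permutation needs to be undone. There is no substantive obstacle; this is a direct corollary, and the only thing that could go wrong is a mismatch in how the blocks are labeled.
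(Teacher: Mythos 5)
Your proof is correct and takes exactly the route the paper intends: the paper states that the corollary "follows immediately from Theorem~\ref{thm:blockrow} as well as Lemma~\ref{lem:intersection}," and you fill in the details precisely — Lemma~\ref{lem:intersection} over the $c$ block rows to get $\mathcal{K}(H) = \bigcap_j \mathcal{K}([J_{j,1}\,\dotsm\,J_{j,n_0}])$, Theorem~\ref{thm:blockrow}(1) applied to each block row, and the standard distributivity of Cartesian products over intersections for the left-hand equality. Your check that all $J_{j,i}$ are $t\times t$ (so the block decompositions are consistent) is the right thing to verify, and there is no gap.
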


We can make use of the highly structured nature of cyclic and quasi-cyclic parity-check codes in decoding algorithms.

\begin{defi}
    A parity-check matrix $H$ (resp., polytope $\mathscr{P}$) is called quasi-cyclic with shifting constraint $n_0$ if every quasi-cyclic shift by $n_0$ of a row of $H$ (resp., vertex of $\mathscr{P}$) is a row of $H$ (resp., vertex of $\mathscr{P}$).
\end{defi}

As mentioned in \cite{Heidarpour}, given a cyclic code with a parity-check matrix that does not consist of all possible cyclic shifts of its rows, all cyclic shifts of integer vertices of the fundamental polytope are also vertices. However, this is not necessarily the case for non-integer vertices. In \cite{Heidarpour}, it is shown that if a parity-check matrix of a cyclic code contains all possible cyclic shifts of its rows, non-integer vertices are also vertices of the fundamental polytope. We prove analogous results for quasi-cyclic parity-check codes.

\begin{thm} \label{thm:qcpolytope}
    A polytope $\poly(H)$ is quasi-cyclic with shifting constraint $n_0$ if the matrix $H$ is quasi-cyclic with shifting constraint $n_0$.
\end{thm}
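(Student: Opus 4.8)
The plan is to show directly that the vertex set of $\poly(H)$ is closed under cyclic shift by $n_0$ positions, using the quasi-cyclicity of $H$ together with the fact that a permutation of coordinates carries the relaxed polytope of a parity-check matrix to the relaxed polytope of the permuted parity-check matrix. First I would fix notation: let $\tau$ denote the coordinate permutation on $\R^{tn_0}$ (or $\R^n$, whichever length is in play) that realizes the cyclic shift by $n_0$ symbol positions, so that $\tau$ acts on vectors $\mathbf{y}\mapsto\mathbf{y}^{\,n_0}$ in the notation of Section~\ref{sec:qc}. The key structural observation is that $\tau$ permutes the coordinates of $\R^{tn_0}$, hence is an invertible linear (indeed orthogonal) map, so it sends vertices of any polytope to vertices of the image polytope and faces to faces. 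Thus it suffices to prove $\tau\bigl(\poly(H)\bigr)=\poly(H)$.

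To establish that equality I would unwind the definition $R(H)=\bigcap_{j}\poly(C(\row_j(H)))$. Applying the linear map $\tau$ commutes with intersection, so $\tau(R(H))=\bigcap_{j}\tau\bigl(\poly(C(\row_j(H)))\bigr)$. For a single row $\mathbf{h}=\row_j(H)$, the single-check polytope $\poly(C(\mathbf{h}))$ is defined purely in terms of the support of $\mathbf{h}$ (it is the "even-subset" polytope on the coordinates in $\supp(\mathbf{h})$, with the remaining coordinates in $[0,1]$), so $\tau\bigl(\poly(C(\mathbf{h}))\bigr)=\poly(C(\tau(\mathbf{h})))$, where $\tau(\mathbf{h})$ is the row obtained by shifting $\mathbf{h}$ by $n_0$ positions. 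Now I invoke the hypothesis: since $H$ is quasi-cyclic with shifting constraint $n_0$, the shifted row $\tau(\mathbf{h})$ is again a row of $H$. Therefore the family $\{\poly(C(\tau(\row_j(H))))\}_{j}$ is exactly a reindexing of the family $\{\poly(C(\row_j(H)))\}_{j}$, and the two intersections coincide: $\tau(R(H))=R(H)$. Since $\tau$ is a bijective linear map fixing the polytope $\poly(H)=R(H)$ setwise, it permutes its vertices, which is precisely the statement that $\poly(H)$ is quasi-cyclic with shifting constraint $n_0$.

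The main obstacle, and the step requiring the most care, is the claim that $\tau\bigl(\poly(C(\mathbf{h}))\bigr)=\poly(C(\tau(\mathbf{h})))$ at the level of the single-parity-check polytopes — i.e. that "relaxing row by row" genuinely commutes with coordinate permutations. This is intuitively clear because a single-check polytope depends only on which coordinates the check touches (its support) and not on their labels, but one should spell out that $C(\mathbf{h})$ as a code is the set of binary vectors with even weight on $\supp(\mathbf{h})$ and arbitrary on the complement, that $\poly$ of this is the convex hull of those $0/1$ vectors, and that applying $\tau$ both to the vectors and to the support commutes. A minor additional point to handle cleanly is the matching of indices: quasi-cyclicity says the set of rows of $H$ is invariant under the shift, so no row is lost or duplicated, which is what makes the two intersections literally equal rather than merely related by containment. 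Everything else — that orthogonal maps send vertices to vertices and that intersection commutes with a bijective linear map — is routine and can be stated without proof.
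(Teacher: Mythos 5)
Your argument is correct, and it rests on the same key observation as the paper's: the cyclic shift of coordinates by $n_0$ sends each row-constraint to a constraint coming from another row of $H$, so the defining family of constraints (and hence the polytope) is setwise invariant, and a bijective linear map fixing a polytope permutes its vertices. The one substantive difference is the technical route. The paper proves shift-invariance for the fundamental cone $\mathcal{K}(H)$ using its inequality description and the identity $H(\mathbf{x}^{n_0 i})^T = H^{n_0 i}\mathbf{x}^T$, and then appeals to Theorem~\ref{thm:equiv} to transfer the conclusion to the polytope; you instead work directly with the relaxed polytope $R(H)=\bigcap_j \poly(C(\row_j(H)))$, show that the coordinate permutation $\tau$ satisfies $\tau(\poly(C(\mathbf{h})))=\poly(C(\tau(\mathbf{h})))$ for each row $\mathbf{h}$, and use the fact that the shift is a bijection on the (finite) row set to reindex the intersection. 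Your version is arguably the cleaner of the two: it stays entirely within $R(H)$, whose vertices are exactly the LP pseudocodewords that the theorem is about, whereas the paper's final step from $\mathcal{K}(H)$ back to the polytope via Theorem~\ref{thm:equiv} is a bit loose (that theorem concerns graph-cover pseudocodewords $\mathcal{P}(H)$ rather than vertices of $R(H)$ directly). You also correctly flag the point that needs care, namely that quasi-cyclicity of $H$ makes the shift a permutation of the row set (injectivity on a finite set gives a bijection), so the shifted family of single-check polytopes is literally the same family reindexed rather than merely contained in it.
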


\begin{proof}
    Let $H$ be a quasi-cyclic parity-check matrix with shifting constraint $n_0$. It is clear that the set of inequalities given by $\mathcal{K}(H)$ in Definition~\ref{def:cone} is both completely determined by and completely determines $H$. Let $H^{n_0 i}$ be $H$ with its rows all cyclically shifted $n_0 i$ positions to the right. Because $H$ is quasi-cyclic, $H^{n_0 i}$ for $i \in \Z$ is the same as $H$ up to row permutations, and so the set of inequalities in $\mathcal{K}(H)$ is the same as those in $\mathcal{K}(H^{n_0 i})$, i.e. $\mathcal{K}(H) = \mathcal{K}(H^{n_0 i})$.

    Let $\mathbf{x} \in \mathcal{K}(H)$. Note that $H (\mathbf{x}^{n_0 i})^T = H^{n_0 i} \mathbf{x}^T$, so $\mathbf{x}^{n_0 i} \in \mathcal{K}(H)$. The fundamental cone is quasi-cyclic if $H$ and $H^{n_0 i}$ have the same fundamental cone, which is satisfied if $H$ contains all quasi-cyclic shifts of its rows. Using Theorem~\ref{thm:equiv}, we conclude this is also true of the polytopes of $H$ and $H^{n_0 i}$.
\end{proof}

A crucial property of LP decoder performance on a code with quasi-cyclic parity-check matrix is given in the following theorem.

\begin{thm} \label{thm:qcdecoder}
    If the parity-check matrix is quasi-cyclic with shifting constraint $n_0$, then over the binary symmetric channel, a shift of the error vector by $n_0 i$ for any $i \in \Z$ does not affect the success or failure of the LP decoder.
\end{thm}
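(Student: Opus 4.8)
The plan is to exploit the LP decoder's characterization via the fundamental cone together with Theorem~\ref{thm:qcpolytope}, and to track what a coordinate shift does to both the error pattern and the LLR vector. First I would set up the relevant bijection: over the binary symmetric channel with crossover probability $p < 1/2$, the LLR vector $\boldsymbol\gamma$ depends only on the received word $\mathbf{w} = \mathbf{c} + \mathbf{e}$, with $\gamma_i$ taking one of two values according to whether $w_i = 0$ or $w_i = 1$. Without loss of generality the transmitted codeword is $\mathbf{0}$ (the LP decoder's success is translation-invariant in the standard way, since we may subtract $\mathbf{c}$), so the received word is exactly the error vector $\mathbf{e}$, and $\boldsymbol\gamma$ is the coordinatewise image of $\mathbf{e}$ under $e_i \mapsto \gamma_i$. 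The LP decoder succeeds on error $\mathbf{e}$ precisely when the unique optimal solution of Equation~\eqref{E:LCLP} with cost vector $\boldsymbol\gamma$ is $\mathbf{0}$.

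Next I would observe that shifting the error vector by $n_0 i$ positions—i.e.\ replacing $\mathbf{e}$ by $\mathbf{e}^{n_0 i}$—produces the LLR vector $\boldsymbol\gamma^{n_0 i}$, the same cyclic shift of $\boldsymbol\gamma$, because the map $e_i \mapsto \gamma_i$ is applied coordinatewise. So the shifted decoding problem is $\min \boldsymbol\gamma^{n_0 i}\cdot\mathbf{y}$ subject to $\mathbf{y}\in R(H)$. The key point is that $R(H)$ is invariant under the cyclic shift by $n_0$: by Theorem~\ref{thm:qcpolytope} (and the argument in the proof of Theorem~\ref{thm:qcpolytope}, where $\mathcal{K}(H)=\mathcal{K}(H^{n_0 i})$ and correspondingly the polytopes agree, since $H$ quasi-cyclic means $H^{n_0 i}$ is a row permutation of $H$), the shift-by-$n_0$ map $\rho\colon \mathbf{y}\mapsto \mathbf{y}^{n_0}$ is a linear automorphism of $\R^n$ carrying $R(H)$ onto $R(H)$. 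Since $\rho$ is a permutation of coordinates, it preserves inner products: $\boldsymbol\gamma^{n_0 i}\cdot \mathbf{y} = \boldsymbol\gamma\cdot \rho^{-i}(\mathbf{y})$. Therefore $\mathbf{y}^\star$ optimizes the shifted problem if and only if $\rho^{-i}(\mathbf{y}^\star)$ optimizes the original problem, and this correspondence is a bijection on feasible points preserving optimality and uniqueness. In particular the shifted problem has unique optimum $\mathbf{0}$ if and only if the original does (noting $\rho$ fixes $\mathbf{0}$), which is exactly the statement that success or failure is unchanged.

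Finally I would address the reduction to the all-zero codeword: if the transmitted codeword is a general $\mathbf{c}\in C(H)$ rather than $\mathbf{0}$, one uses the standard symmetry of LP decoding under the symmetry group of the BSC—adding a codeword to the received word and negating the corresponding LLR signs maps the decoding problem bijectively onto the all-zero case—so success on error $\mathbf{e}$ with transmitted $\mathbf{c}$ is equivalent to success on error $\mathbf{e}$ with transmitted $\mathbf{0}$; since a cyclic shift by $n_0$ also sends codewords to codewords (as $H$, hence $C(H)$, is quasi-cyclic), the reduction is compatible with the shift. The main obstacle I anticipate is being careful about degenerate LPs: if the optimum of Equation~\eqref{E:LCLP} is not unique, the "success" of the decoder must be declared a failure (or handled by a tie-breaking convention), and one must check that whatever convention is adopted is itself shift-equivariant—but since $\rho$ is a coordinate permutation commuting with everything in sight, any natural convention (e.g.\ "declare failure unless the optimal face is the single vertex $\mathbf{0}$") transfers without trouble. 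The rest is routine once the invariance $R(H)=\rho(R(H))$ and the LLR-shift identity are in place.
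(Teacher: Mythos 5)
Your proposal is correct and takes essentially the same approach as the paper: reduce to the all-zero codeword via codeword-independence of LP decoding, invoke the quasi-cyclic invariance of the polytope from Theorem~\ref{thm:qcpolytope}, and observe that the coordinate-shift bijection carries optima of the shifted problem to optima of the original. The paper states this more tersely (without explicitly tracking the LLR vector, the automorphism $\rho$, or the degeneracy caveat), but the underlying argument is identical.
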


\begin{proof}
    The probability that the LP decoder fails is independent of the codeword which was transmitted \cite{FWK}, so we can assume that the transmitted codeword is the all zero vector. Suppose $\mathbf{e}$ is the error vector, and hence $\mathbf{e}$ is also our received vector. If the cost of vertex $\mathbf{v}$ is minimum, then because of the quasi-cyclic property of the polytope, the cost of the vertex $\mathbf{v}^{n_0 i}$ is also minimum for the error pattern $\mathbf{e}^{n_0 i}$. Hence, when error vectors are quasi-cyclic shifts of each other, the corresponding LP decoder outputs are also quasi-cyclic shifts of each other. This means that if the output of the LP decoder for some vector $\mathbf{v}$ is integer (or non-integer), then the LP decoder outputs for $\mathbf{v}^{n_0 i}$ are also integer (or non-integer).
\end{proof}

The performance of an LP decoder is greatly dependent on the parity-check matrix representation of the code. As shown in e.g. \cite{FWK,KS07}, introducing redundancy to a parity-check matrix by adding additional rows can improve LP decoder performance. However, adding these redundant rows randomly is unlikely to have significant positive impact. Theorems~\ref{thm:qcpolytope} and \ref{thm:qcdecoder} suggest an algorithmic approach to finding ``good'' redundant rows to improve LP decoder performance. The algorithm, based on that for cyclic codes in \cite{Heidarpour}, is as follows:

\begin{enumerate}
    \item Identify a low weight codeword among the codewords of the dual code that are not rows of the parity-check matrix, $H$.
    \item Add this codeword and all its quasi-cyclic shifts to $H$.
    \item Examine the LP decoder performance. If it achieves the required performance, stop. Otherwise, return to 1).
\end{enumerate}

We conclude this section by providing an additional application of this theory for quasi-cyclic codes to quantum stabilizer codes.

CSS codes using cyclic and quasi-cyclic blocks were first introduced in \cite{li2004quantumcyclic} and \cite{hagiwara2007quantum}, respectively. These are CSS codes for which $H_1$ and $H_2$ in Equation~\ref{eq:css} are cyclic (resp. quasi-cyclic) parity-check matrices. Notably, because all binary Hamming codes have cyclic representations, the family of Steane codes have cyclic CSS code representations.

The following corollary captures the result for the CSS construction given by Theorem~\ref{pr:css}.

\begin{co} \label{co:qccss}
    Let $H_1 = (J_{j,i })$ for $j \in [c_1]$ and $i \in [n_1]$ and $H_2 = (K_{j,i})$ for $j \in [c_2]$ and $i \in [n_2]$ be two quasi-cyclic matrices composed of block circulants and let $C$ be a CSS code from the parity-check matrix $H$ as given in Equation~\ref{eq:css}. Then 
    \[ \prod_{i=1}^{n_0} \left( \bigcap_{j=1}^{c_0} \mathcal{K}(J_{j,i}) \right) \times \prod_{i=1}^{n_1} \left( \bigcap_{j=1}^{c_1} \mathcal{K}(K_{j,i}) \right) \subseteq \mathcal{K}(H_1) \times \mathcal{K}(H_2) = \mathcal{K}(H).\]
\end{co}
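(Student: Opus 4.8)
The plan is to assemble this corollary from pieces already established in the paper, since it is a direct combination of Corollary~\ref{co:qc} (applied twice) with Theorem~\ref{pr:css}. First I would note that by Theorem~\ref{pr:css}, the CSS matrix $H = \begin{bmatrix} H_1 & 0 \\ 0 & H_2 \end{bmatrix}$ has $\mathcal{K}(H) = \mathcal{K}(H_1) \times \mathcal{K}(H_2)$ (this is part (2) of Theorem~\ref{thm:products}, which underlies Theorem~\ref{pr:css}); this gives the final equality in the displayed chain. So the remaining work is entirely about the containment, and it reduces to showing $\prod_{i=1}^{n_0} \bigl( \bigcap_{j=1}^{c_0} \mathcal{K}(J_{j,i}) \bigr) \subseteq \mathcal{K}(H_1)$ and the analogous statement for $H_2$, then taking the product of the two containments.

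Next I would invoke Corollary~\ref{co:qc} directly for each factor. Since $H_1 = (J_{j,i})$ is quasi-cyclic in circulant form, Corollary~\ref{co:qc} gives $\prod_{i=1}^{n_0} \bigl( \bigcap_{j=1}^{c_0} \mathcal{K}(J_{j,i}) \bigr) \subseteq \mathcal{K}(H_1)$ verbatim, and likewise $\prod_{i=1}^{n_1} \bigl( \bigcap_{j=1}^{c_1} \mathcal{K}(K_{j,i}) \bigr) \subseteq \mathcal{K}(H_2)$. Taking the Cartesian product of these two containments yields
\[
\prod_{i=1}^{n_0} \Bigl( \bigcap_{j=1}^{c_0} \mathcal{K}(J_{j,i}) \Bigr) \times \prod_{i=1}^{n_1} \Bigl( \bigcap_{j=1}^{c_1} \mathcal{K}(K_{j,i}) \Bigr) \subseteq \mathcal{K}(H_1) \times \mathcal{K}(H_2),
\]
and combining with the equality $\mathcal{K}(H_1) \times \mathcal{K}(H_2) = \mathcal{K}(H)$ from Theorem~\ref{pr:css} closes the argument. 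The only subtlety worth a sentence is that the Cartesian product is monotone: if $A \subseteq A'$ and $B \subseteq B'$ then $A \times B \subseteq A' \times B'$, which is immediate.

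The main obstacle here is not mathematical depth but bookkeeping with the indices and notation: the statement as written has a mild inconsistency in that it labels $H_1$ as having blocks indexed by $j \in [c_1], i \in [n_1]$ but then the displayed inequality uses $c_0, n_0$ for the $J$-blocks and $c_1, n_1$ for the $K$-blocks, so in writing the proof I would silently use whichever indexing makes Corollary~\ref{co:qc} apply cleanly (reading the $J$-block quasi-cyclic structure with parameters $c_0, n_0$ and the $K$-block structure with parameters $c_1, n_1$, matching the display). Beyond that, there is genuinely nothing to grind through — the proof is a two-line citation of Corollary~\ref{co:qc} (twice) and Theorem~\ref{pr:css}, plus the trivial monotonicity of Cartesian products. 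I would therefore write the proof as essentially: "This follows from applying Corollary~\ref{co:qc} to each of $H_1$ and $H_2$, taking the product of the resulting containments, and then applying Theorem~\ref{pr:css}, which gives $\mathcal{K}(H) = \mathcal{K}(H_1) \times \mathcal{K}(H_2)$."
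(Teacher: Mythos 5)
Your proof is correct and is essentially identical to the paper's, which simply cites Corollary~\ref{co:qc} and Theorem~\ref{pr:css} and declares the result immediate; you spell out the same two-step argument (apply Corollary~\ref{co:qc} to each of $H_1$ and $H_2$, take the Cartesian product, then use the equality $\mathcal{K}(H) = \mathcal{K}(H_1) \times \mathcal{K}(H_2)$ coming from Theorem~\ref{thm:products}(2) via Theorem~\ref{pr:css}) in a bit more detail. You are also right that the statement's indexing is mildly inconsistent (the hypothesis uses $c_1, n_1$ for the $J$-blocks while the display uses $c_0, n_0$), and resolving that silently in the obvious way is fine.
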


\begin{proof}
    This follows immediately from Theorem~\ref{pr:css} and Corollary~\ref{co:qc}.
\end{proof}

\begin{exmp}
The Hamming code $C(H)$ given in Example \ref{ex:hammingpsc} is cyclic, and so $C(\widetilde{H})$ where $\widetilde{H} = \begin{bmatrix} H & 0 \\ 0 & H\end{bmatrix}$ as in Example~\ref{ex:steane},  is the first Steane code, and is hence a quasi-cyclic CSS code. By adding the remaining four cyclic shifts of the first row to $H$, we obtain a representation with no non-codeword pseudocodewords. $\widetilde{H}$ thus also has no non-codeword pseudocodewords. A similar approach can be taken for any larger Steane code or any quasi-cyclic CSS code.
\end{exmp}

We now give an example of a quasi-cyclic quantum LDPC code. We use the code given in Example 5.5 in \cite{hagiwara2007quantum}.

\begin{exmp}
    Let $\mathcal{H}_C$ and $\mathcal{H}_D$ be given as below.
    \begin{equation*}
        \mathcal{H}_C = \begin{bmatrix} 1 & 2 & 4 & 3 & 6 & 5 \\
        4 & 1 & 2 & 5 & 3 & 6 \\
        2 & 4 & 1& 6 & 5 & 3 \end{bmatrix},
        \quad
        \mathcal{H}_D = \begin{bmatrix} 4 & 2 & 1 & 6 & 3 & 5 \\
        1 & 4 & 2 & 5 & 6 & 3 \\
        2 & 1 & 4 & 3 & 5 & 6\end{bmatrix}.
    \end{equation*}
    Replace each entry $c_{i,j}$ of $\mathcal{H}_C$ and $d_{i,j}$ of $\mathcal{H}_D$ with a $7 \times 7$ binary circulant permutation matrix $P_{c_{i,j}}$ or $P_{d_{i,j}}$ that is obtained from the identity matrix by cyclically shifting the rows  right by $c_{i,j}$ or $d_{i,j}$ positions, respectively. Call the resulting $21 \times 42$ binary matrices $H_C$ and $H_D$. Let $H=\begin{bmatrix} H_C & 0 \\ 0 & H_D\end{bmatrix}$ be the parity-check matrix of the CSS code obtained by using the matrices $H_C$ and $H_D$. Then $C(H)$ is a binary quasi-cyclic quantum LDPC code.

    Corollary~\ref{co:qccss} can be applied to find some of the pseudocodewords of $C(H)$. It reduces the problem to determining the fundamental cone of the 
    $P_{c_{i,j}}$ and $P_{d_{i,j}}$. We see that 
     \[ \prod_{i=1}^{5} \left( \bigcap_{j=1}^{3} \mathcal{K}(P_{c_{j,i}}) \right) \times \prod_{i=1}^{5} \left( \bigcap_{j=1}^{3} \mathcal{K}(P_{c_{j,i})} \right) \subseteq \mathcal{K}(H_C) \times \mathcal{K}(H_D) = \mathcal{K}(H).\]
    Once found, if the LP decoder performance is not as desired, a low-weight word in the dual code and its quasi-cyclic shifts can be added to $H$.

    For example, the low-weight word 
    \[\mathbf{c} = (1001000, 0000000, 0000110, 0010010, 0010100, 0101000)\]
    is in $C(H_C)^\perp$ but not a row of $H_C$. Therefore, the rows
    
    \noindent\scalebox{.69}{
    \begin{minipage}{\linewidth}
    \[ \left[ \begin{array}{ccccccc|ccccccc|ccccccc|ccccccc|ccccccc|ccccccc} 
        1 & 0 & 0 & 1 & 0 & 0 & 0 & 0 & 0 & 0 & 0 & 0 & 0 & 0 & 0 & 0 & 0 & 0 & 1 & 1 & 0 & 0 & 0 & 1 & 0 & 0 & 1 & 0 & 0 & 0 & 1 & 0 & 1 & 0 & 0 & 0 & 1 & 0 & 1 & 0 & 0 & 0 \\
        0 & 1 & 0 & 0 & 1 & 0 & 0 & 0 & 0 & 0 & 0 & 0 & 0 & 0 & 0 & 0 & 0 & 0 & 0 & 1 & 1 & 0 & 0 & 0 & 1 & 0 & 0 & 1 & 0 & 0 & 0 & 1 & 0 & 1 & 0 & 0 & 0 & 1 & 0 & 1 & 0 & 0 \\
        0 & 0 & 1 & 0 & 0 & 1 & 0 & 0 & 0 & 0 & 0 & 0 & 0 & 0 & 1 & 0 & 0 & 0 & 0 & 0 & 1 & 1 & 0 & 0 & 0 & 1 & 0 & 0 & 0 & 0 & 0 & 0 & 1 & 0 & 1 & 0 & 0 & 0 & 1 & 0 & 1 & 0 \\
        0 & 0 & 0 & 1 & 0 & 0 & 1 & 0 & 0 & 0 & 0 & 0 & 0 & 0 & 1 & 1 & 0 & 0 & 0 & 0 & 0 & 0 & 1 & 0 & 0 & 0 & 1 & 0 & 1 & 0 & 0 & 0 & 0 & 1 & 0 & 0 & 0 & 0 & 0 & 1 & 0 & 1 \\
        1 & 0 & 0 & 0 & 1 & 0 & 0 & 0 & 0 & 0 & 0 & 0 & 0 & 0 & 0 & 1 & 1 & 0 & 0 & 0 & 0 & 0 & 0 & 1 & 0 & 0 & 0 & 1 & 0 & 1 & 0 & 0 & 0 & 0 & 1 & 1 & 0 & 0 & 0 & 0 & 1 & 0 \\
        0 & 1 & 0 & 0 & 0 & 1 & 0 & 0 & 0 & 0 & 0 & 0 & 0 & 0 & 0 & 0 & 1 & 1 & 0 & 0 & 0 & 1 & 0 & 0 & 1 & 0 & 0 & 0 & 1 & 0 & 1 & 0 & 0 & 0 & 0 & 0 & 1 & 0 & 0 & 0 & 0 & 1 \\
        0 & 0 & 1 & 0 & 0 & 0 & 1 & 0 & 0 & 0 & 0 & 0 & 0 & 0 & 0 & 0 & 0 & 1 & 1 & 0 & 0 & 0 & 1 & 0 & 0 & 1 & 0 & 0 & 0 & 1 & 0 & 1 & 0 & 0 & 0 & 1 & 0 & 1 & 0 & 0 & 0 & 0 \\
    \end{array}\right]\]
    \end{minipage}}
    
    \noindent can be added to $H_C$ to improve LP decoder performance. A similar process can be used to add rows to $H_D$.
\end{exmp}

\subsection{Pseudocodewords of spatially-coupled codes} \label{sec:sc}

We conclude this section with a relatively short subsection on spatially-coupled LDPC (SC-LDPC) codes, which are a special class of LDPC codes originally introduced in \cite{FZ99}. These codes have highly repetitive structures and exhibit excellent iterative decoder performance \cite{KRU11}.

SC-LDPC codes have parity-check matrices with $L$ block columns, each block column containing $m$ smaller parity-check matrices $H_i$ with $i \in [m]$, where each $H_i \in \F_2^{n_c \times n_r}$. There are two types of SC-LDPC codes: terminated and tailbiting. If $C$ is terminated, its parity-check matrix $H$ has form as in Eq.~\ref{eq:sc-terminated}, and if $C$ is tailbiting, $H$ has form as in Eq.~\ref{eq:sc-tailbiting}.

\noindent\begin{minipage}{.50\linewidth}
{\footnotesize\begin{equation}
   H = \begin{bmatrix} H_0 & & & & & & \\
                         H_1 & H_0 & & & & & \\
                         \vdots & & \ddots & & & &  \\
                         H_m & H_{m-1} &\dots & H_0 & & & \\
                         & H_m & \dots & H_1 & H_0 & & \\
                         & & \ddots & & & \ddots &  \\
                         & & & H_m & \dots & H_1 & H_0\\
                         & & & & \ddots & & \vdots  \\
                         & & & & & H_m &  H_{m-1}  \\
                         & & & & & & H_m \end{bmatrix} \label{eq:sc-terminated}
\end{equation}}
\end{minipage}%
\begin{minipage}{.49\linewidth}
{\footnotesize\begin{equation}
   H = \begin{bmatrix} H_0 & & & & H_m & \dots & H_1 \\
                                 H_1 & H_0 & & & & \ddots & \vdots \\
                                 \vdots & & \ddots & & & & H_m \\
                                 H_m & H_{m-1} &\dots & H_0 & & & \\
                                 & H_m & \dots & H_1 & H_0 & & \\
                                 & & \ddots & & & \ddots &  \\
                                 & & & H_m & \dots & H_1 & H_0 \end{bmatrix} \label{eq:sc-tailbiting}
\end{equation}}
\end{minipage}

\begin{co}
    Let $C$ be an SC-LDPC code with parity-check matrix $H$ as in either Eq.~\ref{eq:sc-terminated} or Eq.~\ref{eq:sc-tailbiting}. Then 
    \[ \underbrace{\left( \bigcap_{j=0}^m \mathcal{K}(H_j) \right) \times \dotsm \times \left( \bigcap_{j=0}^m \mathcal{K}(H_j) \right)}_{L \text{ times}} \subseteq \mathcal{K}(H).\]
\end{co}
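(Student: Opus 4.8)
The plan is to reduce the statement to Theorem~\ref{thm:blockrow} together with Lemma~\ref{lem:intersection} by looking at $H$ through its block-column decomposition. Write $H = \begin{bmatrix} B_1 & B_2 & \dotsm & B_L \end{bmatrix}$, where $B_k$ denotes the $k$-th block column of $H$; since $H$ has $L$ block columns and each $H_j \in \F_2^{n_c \times n_r}$, every $B_k$ lies in $\F_2^{r \times n_r}$ for a common $r$ (the total number of rows of $H$). The first step is to compute the fundamental cone of each $B_k$. Inspecting the terminated matrix of Eq.~\ref{eq:sc-terminated} and the tailbiting matrix of Eq.~\ref{eq:sc-tailbiting}, one sees that every block column $B_k$ is, after reordering its block rows, a vertical stack of the matrices $H_0, H_1, \dotsc, H_m$ (each appearing exactly once) together with some zero blocks $0_{n_c \times n_r}$. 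Because a zero matrix imposes no inequality in Definition~\ref{def:cone}, its fundamental cone is all of $\R_{\geq 0}^{n_r}$, so partitioning the rows of $B_k$ by block row and applying Lemma~\ref{lem:intersection} gives
\[ \mathcal{K}(B_k) = \left( \bigcap_{j=0}^m \mathcal{K}(H_j) \right) \cap \R_{\geq 0}^{n_r} = \bigcap_{j=0}^m \mathcal{K}(H_j) \]
for every $k \in [L]$, using that each $\mathcal{K}(H_j)$ already sits inside $\R_{\geq 0}^{n_r}$.

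The second step is to apply Theorem~\ref{thm:blockrow}(1) to the block-column decomposition $H = \begin{bmatrix} B_1 & \dotsm & B_L \end{bmatrix}$, which yields
\[ \mathcal{K}(B_1) \times \dotsm \times \mathcal{K}(B_L) \subseteq \mathcal{K}(H). \]
Substituting the value of each $\mathcal{K}(B_k)$ from the first step produces precisely the claimed containment, with the same argument covering the terminated and tailbiting cases simultaneously.

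The argument is essentially bookkeeping, so the only point demanding care — and the one I expect to be the main obstacle — is the first step: one must verify, separately for the two forms of $H$, that each block column genuinely decomposes as a stack of $H_0, \dotsc, H_m$ plus zero blocks, and in particular that the wraparound in the tailbiting matrix does not force two of the $H_j$ into the same block row (which holds in the standard regime $L > m$ depicted in Eq.~\ref{eq:sc-tailbiting}). It is worth noting that for the containment we only need $\mathcal{K}(B_k) \supseteq \bigcap_{j=0}^m \mathcal{K}(H_j)$, which holds as soon as the nonzero block rows of $B_k$ are among $H_0, \dotsc, H_m$; hence even a degenerate overlap would not break the proof, only the intermediate equality.
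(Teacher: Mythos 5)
Your proof matches the paper's almost exactly: both decompose $H$ into its $L$ block columns, use Lemma~\ref{lem:intersection} (together with the fact that zero blocks impose no constraints) to show each block column has fundamental cone $\bigcap_{j=0}^m \mathcal{K}(H_j)$, and then apply Theorem~\ref{thm:blockrow}(1) to the block-column decomposition. The only difference is that you spell out the zero-block and wraparound details, which the paper leaves implicit.
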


\begin{proof}
    In both Eq.~\ref{eq:sc-terminated} and Eq.~\ref{eq:sc-tailbiting}, each $H_j$ appears exactly one time in each block column, and all other block matrices are zero matrices. So, by Lemma~\ref{lem:intersection}, the fundamental cone of a block column of $H$ is $\bigcap_{j=0}^m \mathcal{K}(H_j)$. By Theorem~\ref{thm:blockrow}, taking each block column of $H$ as an individual block matrix, the result follows.
\end{proof}

\section{Conclusion}
\label{sec:conclusions}
In this paper, we determined the pseudocodewords of codes defined using component codes. We applied these ideas to quantum stabilizer codes from the CSS construction, quasi-cyclic parity-check codes, and spatially-coupled LDPC codes. It remains to apply these results to the MDPC codes used in BIKE as well as other commonly considered families of quantum LDPC codes.

\bibliographystyle{vancouver}
\bibliography{bib.bib}

\end{document}